\newif\ifpublic
\newif\ifapp
\definecolor{linkcolor}{rgb}{0.65,0,0}
\definecolor{citecolor}{rgb}{0,0.65,0}
\definecolor{urlcolor}{rgb}{0,0,0.65}
\renewcommand{\tabcolsep}{4pt}
\def\F{\mathbb{F}}
\def\Z{\mathbb{Z}}
\def\arraystretch{1.2}
\newcommand{\concat}{\mid\mid}
\newcommand{\Pc}{P}
\newcommand{\Q}{Q}
\newcommand{\R}{R}
\newcommand{\Sc}{\mathcal{S}}
\newcommand{\T}{\mathcal{T}}
\newcommand{\J}{\mathcal{J}}
\newcommand{\K}{\mathcal{K}}
\newcommand{\C}{\mathcal{C}}
\newcommand{\Ocal}{\mathcal{O}}
\def\armjac2scal{$4\,270\,563$\xspace}
\def\avrjac2scal{$15\,701\,276$\xspace}
\newcommand{\ie}{i.\,e.\ }
\newcommand{\eg}{eg.\xspace}
\newcommand{\Jac}{\ensuremath{\mathcal{J}}}
\newcommand{\Kum}{\ensuremath{\mathcal{K}}}
\newcommand{\KChudnovsky}{\ensuremath{\Kum^{\mathrm{Sqr}}}}
\newcommand{\KGaudry}{\ensuremath{\Kum^{\mathrm{Can}}}}
\newcommand{\KInter}{\ensuremath{\Kum^{\mathrm{Int}}}}
\newcommand{\KChudnovskydual}{\ensuremath{\dualof{\Kum}^{\mathrm{Sqr}}}}
\newcommand{\KGaudrydual}{\ensuremath{\dualof{\Kum}^{\mathrm{Can}}}}
\newcommand{\KInterdual}{\ensuremath{\dualof{\Kum}^{\mathrm{Int}}}}
\newcommand{\KTetra}{\ensuremath{\Kum^{\mathrm{Tet}}}}
\newcommand{\BCan}{\ensuremath{B^{\mathrm{Can}}}}
\newcommand{\BCandual}{\ensuremath{\dualof{B}^{\mathrm{Can}}}}
\newcommand{\BSqr}{\ensuremath{B^{\mathrm{Sqr}}}}
\newcommand{\BInt}{\ensuremath{B^{\mathrm{Int}}}}
\newcommand{\subgrp}[1]{{\left\langle{#1}\right\rangle}}
\newcommand{\ZZ}{\Z}
\newcommand{\FF}{\mathbb{F}}
\newcommand{\PP}{\mathbb{P}}
\newcommand{\PPsqr}{{\texttt{Sqr4}}} 
\newcommand{\PPmul}{{\texttt{Mul4}}} 
\newcommand{\Hadamard}{{\texttt{Had}}} 
\newcommand{\Dotprod}{\texttt{Dot}}
\newcommand{\field}{\ensuremath{\Bbbk}}
\newcommand{\xADD}{\texttt{xADD}\xspace}
\newcommand{\xDBL}{\texttt{xDBL}\xspace}
\newcommand{\xDBLADD}{\texttt{xDBLADD}\xspace}
\newcommand{\dualof}[1]{\ensuremath{\widehat{#1}}}
\newcommand{\Hadamardn}{\ensuremath{\mathcal{H}}}
\newcommand{\Hadamardd}{\dualof{\Hadamardn}}
\newcommand{\Squaren}{\ensuremath{\mathcal{S}}}
\newcommand{\Squared}{\dualof{\Squaren}}
\newcommand{\Scalen}{\ensuremath{\mathcal{C}}}
\newcommand{\Scaled}{\dualof{\Scalen}}
\newcommand{\Tetran}{\ensuremath{\mathcal{T}}}
\newcommand{\XGn}{T}
\newcommand{\XCn}{X}
\newcommand{\XHn}{Y}
\newcommand{\XGd}{\dualof{\XGn}}
\newcommand{\XCd}{\dualof{\XCn}}
\newcommand{\XHd}{\dualof{\XHn}}
\newcommand{\XTn}{L}
\newcommand{\unsquaredn}{\alpha}
\newcommand{\squaredn}{\mu}
\newcommand{\unsquaredd}{\dualof{\alpha}}
\newcommand{\squaredd}{\dualof{\mu}}
\newcommand{\epsilond}{\dualof{\epsilon}}
\newcommand{\kappad}{\dualof{\kappa}}
\newcommand{\an}{\unsquaredn_1}
\newcommand{\bn}{\unsquaredn_2}
\newcommand{\cn}{\unsquaredn_3}
\newcommand{\dn}{\unsquaredn_4}
\newcommand{\aan}{\squaredn_1}
\newcommand{\bbn}{\squaredn_2}
\newcommand{\ccn}{\squaredn_3}
\newcommand{\ddn}{\squaredn_4}
\newcommand{\ad}{\unsquaredd_1}
\newcommand{\bd}{\unsquaredd_2}
\newcommand{\cd}{\unsquaredd_3}
\newcommand{\dd}{\unsquaredd_4}
\newcommand{\aad}{\squaredd_1}
\newcommand{\bbd}{\squaredd_2}
\newcommand{\ccd}{\squaredd_3}
\newcommand{\ddd}{\squaredd_4}
\newcommand{\En}{E}
\newcommand{\Fn}{F}
\newcommand{\Gn}{G}
\newcommand{\Hn}{H}
\newcommand{\Ed}{\dualof{\En}}
\newcommand{\Fd}{\dualof{\Fn}}
\newcommand{\Gd}{\dualof{\Gn}}
\newcommand{\Hd}{\dualof{\Hn}}
\newcommand{\zplus}{\Z_N^+}
\newcommand{\xid}{\texttt{qID}\xspace}
\newcommand{\xsig}{\texttt{qSIG}\xspace}
\newcommand{\xsign}{\texttt{qDSA}\xspace}
\newcommand{\hashplus}[1]{\overline{#1}}
\newcommand{\prover}{\texttt{prover}\xspace}
\newcommand{\verifier}{\texttt{verifier}\xspace}
\newcommand{\MUL}{{\bf M}}
\newcommand{\SQR}{{\bf S}}
\newcommand{\MLC}{{\bf C}}
\newcommand{\ADD}{{\bf a}}
\newcommand{\SUB}{{\bf s}}
\newcommand{\INV}{{\bf I}}
\newcommand{\SQRT}{{\bf E}}
\begin{document}

\titlerunning{qDSA: Small and Secure Digital Signatures\\
    with Curve-based Diffie--Hellman Key Pairs
}
\title{
	qDSA: Small and Secure Digital Signatures 
    \\
    with Curve-based Diffie--Hellman Key Pairs
}

\ifpublic
\author{
  Joost~Renes\inst{1}\thanks{
This work has been supported 
by the Technology Foundation STW (project 13499 - TYPHOON \& ASPASIA), from the Dutch government.
}
\and
  Benjamin~Smith\inst{2} 
}
\institute{
  Digital Security Group, Radboud University, The Netherlands\\
  \email{j.renes@cs.ru.nl}
  \and
  INRIA \emph{and} Laboratoire d'Informatique de l'École polytechnique
  (LIX),\\
  Université Paris--Saclay, France\\
  \email{smith@lix.polytechnique.fr}
}
\else
\author{\vspace*{-1cm} }
\institute{\vspace*{-1cm}\ }
\fi 

\maketitle
\begin{abstract}
    qDSA is a high-speed, high-security signature scheme
    that facilitates implementations with a very small memory footprint,
    a crucial requirement for embedded systems and IoT devices,
    and that
    uses the same public keys as modern Diffie--Hellman schemes
    based on Montgomery curves (such as Curve25519) or Kummer surfaces.
    qDSA resembles an adaptation of EdDSA
    to the world of Kummer varieties,
    which are quotients of algebraic groups by \(\pm1\).
    Interestingly, qDSA does not require any full group operations
    or point recovery: all computations,
    including signature verification, 
    occur on the quotient where there is no group law.
    We include details on four implementations of qDSA,
    using Montgomery and fast Kummer surface arithmetic
    on the 8-bit AVR {ATmega} and 32-bit ARM Cortex~M0 platforms.
	We find that qDSA significantly outperforms state-of-the-art
    signature implementations in terms of stack usage and code size.
    We also include an efficient compression algorithm
    for points on fast Kummer surfaces,
    reducing them to the same size as compressed elliptic curve points for
    the same security level.

    \noindent \textbf{Keywords.} 
    Signatures, Kummer, Curve25519, Diffie--Hellman, elliptic curve,
    hyperelliptic curve.
\end{abstract}

\section{Introduction}
\label{sec:intro}

Modern asymmetric cryptography based on elliptic and hyperelliptic curves~\cite{koblitz,miller}
achieves two important goals.
The first is efficient key exchange 
using the 
Diffie--Hellman protocol~\cite{diffie-hellman},
using the fact that 
the (Jacobian of the) curve carries the structure of an abelian group.
But in fact, as Miller observed~\cite{miller}, 
we do not need the full group structure for Diffie--Hellman:
the associated \emph{Kummer variety} (the quotient by \(\pm1\)) suffices,
which permits
more efficiently-computable arithmetic~\cite{M87,Gaudry}.
Perhaps the most well-known example is Curve25519~\cite{Ber06},
which offers fast scalar multiplications based on \(x\)-only arithmetic.

The second objective is efficient digital signatures, 
which are critical for authentication.
There are several group-based signature schemes, 
the most important
of which are ECDSA~\cite{ECDSA}, 
Schnorr~\cite{schnorr},
and now EdDSA~\cite{EdDSA} signatures.
In contrast to the Diffie--Hellman protocol, 
all of these signature schemes explicitly require
the group structure of the (Jacobian of the) curve.
An unfortunate side-effect of this is that 
users essentially need two public keys
to support both curve-based protocols.
Further,
basic cryptographic libraries need to provide implementations for 
arithmetic on both the Jacobian and the Kummer variety,
thus complicating and increasing the size of the trusted code base.
For example, the NaCl library~\cite{nacl} 
uses Ed25519~\cite{EdDSA} for signatures, 
and Curve25519~\cite{Ber06} for key exchange.
This problem is worse for genus-2 hyperelliptic curves, where
the Jacobian is significantly harder to use safely than its Kummer surface.

There have been several partial solutions to this problem.
By observing that elements of the Kummer variety are elements of the Jacobian \emph{up to sign},
one can build scalar multiplication on the Jacobian based on the fast Kummer
arithmetic~\cite{Okeya--Sakurai,Recovery}. 
This avoids the need for a separate scalar multiplication on the Jacobian, 
but does not avoid the need for its group law;
it also introduces the need for projecting to and recovering from the
Kummer.
In any case, it does not solve the problem of having different public key types.

Another proposal is XEdDSA~\cite{xeddsa},
which uses the public key on the Kummer variety to construct EdDSA signatures.
In essence, it creates a key pair on the Jacobian
by appending a sign bit to the public key on the Kummer variety,
which can then be used for signatures.
In~\cite{H12} Hamburg shows that one can actually verify signatures using only the
\(x\)-coordinates of points on an elliptic curve,
	which is applied in the
recent STROBE framework~\cite{H17}. 
We generalize this approach to allow Kummer varieties of 
curves of higher genera, and naturally adapt the scheme
by only allowing challenges
up to sign.
This allows us to provide a proof of security, which
has thus far not been attempted
(in~\cite{H12} Hamburg remarks that verifying up to sign
does ``probably not impact security at all'').
Similar techniques have been applied 
for batch verification of ECDSA signatures~\cite{KD14}, 
using the theory of summation polynomials~\cite{summation}.

In this paper we show that there is no intrinsic reason why 
Kummer varieties cannot be used for signatures.
We present \xsign,
a signature scheme relying only on Kummer arithmetic,
and prove it secure in the random oracle model.
It should not be surprising that the reduction in our
proof is slightly weaker than the standard proof
of security of Schnorr signatures~\cite{PS96}, 
but not by more than we should expect.
There is no difference between public keys for \xsign and Diffie--Hellman.
After an abstract presentation in~\S\ref{sec:signatures},
we give a detailed description of 
elliptic-curve \xsign instances in~\S\ref{sec:elliptic}.
We then move on to genus-2 instances based on fast Kummer surfaces,
which give better performance.
The necessary arithmetic appears in~\S\ref{sec:kummer-arithmetic},
before~\S\ref{sec:verif} describes the new verification algorithm.

We also provide an efficient compression method
for points on fast Kummer surfaces in~\S\ref{sec:comp}, 
solving a long-standing open problem~\cite{B-ecc06-talk}.
Our technique means that \xsign public keys for \(g=2\) 
can be efficiently compressed to 32 bytes,
and that \xsign signatures fit into 64 bytes;
it also finally reduces the size of 
Kummer-based Diffie--Hellman public keys
from 48 to 32 bytes.

Finally, we provide constant-time software implementations 
of genus-1 and genus-2 \xsign instances 
for the AVR ATmega and ARM Cortex M0 platforms.
The performance of all four \xsign implementations,
reported in~\S\ref{sec:imp},
comfortably beats earlier implementations 
in terms of stack usage and code size.

\paragraph{Source code.}

We place all of the software described here
into the public domain, to maximize the reusability of our results.
\ifpublic
The software is available at \url{http://www.cs.ru.nl/~jrenes/}.
\else
The software is included as auxiliary material for this submission,
and will be made publicly available online.
\fi

\section{The \xsign signature scheme}\label{sec:signatures}

In this section we define \xsign,
the
\emph{quotient Digital Signature Algorithm}.
We start by 
recalling the basics of Kummer varieties in~\S\ref{subsec:kummervar} 
and defining key operations in~\S\ref{sec:basic-ops}.
The rest of the section is dedicated to the definition of the
\xsign signature scheme, which is presented in full in 
Algorithm~\ref{alg:xsign}, and its proof of security,
which follows Pointcheval and Stern~\cite{PS96,PS00}.
\xsign closely resembles the Schnorr signature scheme~\cite{schnorr},
as it results from applying the Fiat--Shamir heuristic~\cite{FS86}
to an altered Schnorr identification protocol,
together with a few standard changes as in EdDSA~\cite{EdDSA}.
We comment on some special properties of \xsign in \S\ref{subsec:xsign}.

Throughout, we work over finite fields \(\FF_p\) with \(p > 3\).
Our low-level algorithms include costs in terms of basic
\(\FF_p\)-operations:
\(\MUL\), \(\SQR\), \(\MLC\), \(\ADD\), \(\SUB\), \(\INV\), and \(\SQRT\)
denote
the unit costs of computing a single
multiplication, squaring, multiplication by a small
constant, addition, subtraction, inverse, and square root, respectively.

\subsection{The Kummer variety setting}\label{subsec:kummervar}

Let \(\C\) be a (hyper)elliptic curve and \(\J\) its Jacobian\footnote{%
    In what follows, we could replace \(\J\) by an arbitrary 
    abelian group and
    all the proofs would be completely analogous. 
    For simplicity we restrict to the cryptographically most interesting case
    of a Jacobian.
}.
The Jacobian is a commutative algebraic group with 
group operation \(+\), inverse \(-\), and identity \(0\).
We assume \(\J\) has a subgroup of large prime order \(N\).
The associated \emph{Kummer variety} \(\K\) is the quotient \(\K=\J/\pm\).
By definition, 
working with \(\K\) corresponds to working on \(\J\) \emph{up to sign}.
If \(\Pc\) is an element of \(\J\), we denote its image in \(\K\)
by \(\pm\Pc\).
In this paper we take \(\log_2 N\approx 256\), and consider two important cases.
\begin{description}
	\item[Genus 1.] 
        Here \(\J = \C/\FF_p\) is an elliptic curve 
        with \(\log_2 p\approx 256\),
        while \(\K=\PP^1\) is 
        the \(x\)-line. We choose \(\C\) to be Curve25519~\cite{Ber06}, 
        which is the topic of \S\ref{sec:elliptic}.
	\item[Genus 2.] 
        Here \(\J\) is the Jacobian of a genus-2 curve \(\C/\FF_p\),
        where \(\log_2 p\approx 128\), 
        and \(\K\) is a \emph{Kummer surface}. 
        We use the Gaudry--Schost parameters~\cite{gaudry-schost} 
        for our implementations.
        Kummer arithmetic, including some new constructions we need for 
        signature verification and compression, is described
        in \S\ref{sec:kummer-arithmetic}-\ref{sec:comp}.
\end{description}

A point \(\pm P\) in \(\K(\FF_p)\)
is the image of a pair of points \(\{P,-P\}\) on \(\Jac\).
It is important to note that 
\(P\) and \(-P\) are not necessarily in \(\Jac(\FF_p)\);
if not, then they are conjugate points in \(\Jac(\FF_{p^2})\),
and correspond to points in \(\Jac'(\FF_p)\),
where \(\Jac'\) is the \emph{quadratic twist} of \(\Jac\).
Both \(\Jac\) and \(\Jac'\) always have the same Kummer variety;
we return to this fact, and its implications for our scheme,
in~\S\ref{subsec:xsign} below.


\subsection{Basic operations}
\label{sec:basic-ops}

While a Kummer variety \(\K\) has no group law,
the operation
\begin{equation}
    \label{eq:pseudo-add}
    \left\{\pm\Pc, \pm \Q\right\} 
    \mapsto
    \left\{\pm(\Pc+\Q),\pm(\Pc-\Q)\right\}
\end{equation}
is well-defined.
We can therefore define a \emph{pseudo-addition}
operation by
\[
    \xADD: \left(\pm\Pc, \pm\Q, \pm (\Pc-\Q)\right)
    \mapsto
    \pm (\Pc+\Q).
\]
The special case where \(\pm (\Pc - \Q) = \pm 0\)
is the \emph{pseudo-doubling}
\(
    \xDBL: \pm \Pc \mapsto \pm [2]\Pc
\).
In our applications we 
can often improve efficiency by 
combining two of these operations in a single function
\[
    \xDBLADD: \left(\pm\Pc, \pm\Q, \pm (\Pc-\Q)\right)
    \longmapsto
    \left(\pm[2]\Pc,\pm (\Pc+\Q)\right)
    \,.
\]
For any integer \(m\), the scalar multiplication \([m]\)
on \(\J\) induces the key cryptographic operation of
\emph{pseudomultiplication} on \(\K\), 
defined by
\[
    {\tt Ladder} : (m,\pm\Pc) \longmapsto \pm[m]\Pc
    \,.
\]
As its name suggests,
we compute {\tt Ladder} using 
Montgomery's famous ladder algorithm~\cite{M87},
which is a uniform sequence of \(\xDBLADD\)s
and constant-time conditional swaps.\footnote{%
    In contemporary implementations such as NaCl, 
    the {\tt Ladder} function is sometimes named \texttt{crypto\_scalarmult}.
}
This constant-time nature will be important for signing.

Our signature verification requires a function
\( {\tt Check} \) on \(\K^3 \)
defined by
\[
    {\tt Check}
    :
    (\pm\Pc, \pm\Q, \pm\R) 
    \longmapsto
    \begin{cases}
        {\bf True} &
        \text{if } \pm\R\in\left\{\pm(\Pc+\Q),\pm(\Pc-\Q)\right\}
        \\
        {\bf False} &
        \text{otherwise}
    \end{cases}
\]

Since we are working with projective points, 
we need a way to uniquely represent them.
Moreover, we want this representation to be as small as possible, 
to minimize communication overhead.
For this purpose we define the functions
\[
    {\tt Compress} : \K(\FF_p) \longrightarrow \{0,1\}^{256}
    \,,
\]
writing \(\overline{\pm\Pc}:={\tt Compress}(\pm\Pc)\),
and 
\[
    {\tt Decompress} : 
    \{0,1\}^{256} 
    \longrightarrow 
    \K(\FF_p) \cup \{\bot\}
\]
such that 
\({\tt Decompress}(\overline{\pm\Pc}) = \pm\Pc\)
for \(\pm P\) in \(\Kum(\FF_p)\)
and
\({\tt Decompress}(X)=\bot\)
for \(X\in\{0,1\}^{256}\setminus\operatorname{Im}({\tt Compress})\).

For the remainder of this section we assume 
that \texttt{Ladder},
\texttt{Check},
\texttt{Compress},
and
\texttt{Decompress}
are defined.
Their implementation depends on whether we are in the genus 1 or 2
setting; we return to this in later sections.

\subsection{The \xid identification protocol}
\label{subsec:id}

Let \(\Pc\) be a generator of a prime order subgroup of \(\J\),
of order \(N\), and let \(\pm\Pc\) be its image in \(\K\). 
Let \(\zplus\) denote the subset of \(\Z_N\)
with zero least significant bit
(where we identify elements of \(\Z_N\) 
with their representatives in \([0,N-1]\)).
Note that since \(N\) is odd, \({\tt LSB}(-x)=1-{\tt LSB}(x)\) 
for all \(x\in\Z_N^*\).
The private key is an element \(d\in\Z_N\).
Let \(\Q=[d]\Pc\) and let the public key be \(\pm\Q\).
Now consider the following Schnorr-style identification protocol,
which we call \xid:
\begin{enumerate}
	\item The \prover sets \(r\gets_R\Z_N^*\), \(\pm\R\gets\pm[r]\Pc\)
	and sends \(\pm\R\) to the \verifier;
	\item The \verifier sets \(c\gets_R\Z_N^+\) and sends \(c\) to the \prover;
	\item The \prover sets \(s\gets(r-cd)\mod{N}\) and sends \(s\) to the \verifier;
	\item The \verifier accepts if and only if 
	\(\pm\R\in\left\{\pm([s]\Pc+[c]\Q),\pm([s]\Pc-[c]\Q)\right\}\).
\end{enumerate}
There are some
important differences between \xid 
and the basic Schnorr identification protocol in~\cite{schnorr}.
\begin{description}
	\item[Scalar multiplications on \(\K\).] 
		It is well-known that one can use \(\K\) to perform the scalar
		multiplication~\cite{Okeya--Sakurai,Recovery} 
        within a Schnorr
		identification or signature scheme,
		but
        with this approach one must always lift back to an element of a group.
		In contrast, in our scheme this recovery step is not necessary.
	\item[Verification on \(\K\).]
		The original verification~\cite{schnorr} 
		requires checking that \(\R = [s]\Pc + [c]\Q\)
		for some \(\R, [s]\Pc, [c]\Q\in\J\).
		Working on \(\K\), we only have these values up to sign 
		(\ie \(\pm\R\), \(\pm[s]\Pc\) and \(\pm[c]\Q\)), 
		which is not enough to check that \(\R=[s]\Pc+[c]\Q\).
		Instead, we only verify that \(\pm\R=\pm\left([s]\Pc\pm[c]\Q\right)\).
	\item[Challenge from \(\zplus\).]
        A Schnorr protocol using the weaker verification above
		would not satisfy the special soundness property:
		the transcripts 
		\(\left(\pm\R,c,s\right)\)
		and
		\(\left(\pm\R,-c,s\right)\)
		are both valid, and do not allow us to extract a witness.
        Choosing \(c\) from \(\zplus\) instead of \(\ZZ\) eliminates
        this possibility,  and allows a security proof 
        (this is the main difference with Hamburg's STROBE~\cite{H17}).
\end{description}

\begin{proposition}\label{thm:sigma}
    The \xid identification protocol is a sigma protocol.
\end{proposition}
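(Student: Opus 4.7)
The plan is to verify the three defining properties of a sigma protocol\dash completeness, special soundness, and special honest-verifier zero-knowledge (SHVZK)\dash for \xid.

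Completeness is a one-line check. When the prover is honest, $s = r - cd \bmod N$, so on $\J$ we have $[s]\Pc + [c]\Q = [r-cd]\Pc + [cd]\Pc = [r]\Pc = \R$. Projecting to $\K$ gives $\pm\R = \pm([s]\Pc + [c]\Q)$, so the verifier's \texttt{Check} call returns \textbf{True}.

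For special soundness, I would extract the witness from two accepting transcripts $(\pm\R, c, s)$ and $(\pm\R, c', s')$ with $c\neq c'$ in $\zplus$. Acceptance of each transcript means there exist signs $\epsilon,\epsilon'\in\{\pm 1\}$ and a combined sign $\delta\in\{\pm 1\}$ such that, when both verification relations are lifted to $\J$, one has $[s]\Pc + \epsilon[c]\Q = \delta[s']\Pc + \delta\epsilon'[c']\Q$. Substituting $\Q = [d]\Pc$ and using that $\Pc$ has order $N$ yields the scalar congruence $s - \delta s' \equiv (\delta\epsilon' c' - \epsilon c)\,d \pmod N$. The coefficient of $d$ has the form $\pm c \pm c'$; it vanishes mod $N$ only if $c \equiv \pm c' \pmod N$. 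The case $c\equiv c'$ is excluded by hypothesis, and $c\equiv -c' \pmod N$ is ruled out by the restriction $c, c' \in \zplus$: since $N$ is odd, ${\tt LSB}(-c'\bmod N) = 1 - {\tt LSB}(c') = 1$, contradicting $-c' \in \zplus$. Hence $\delta\epsilon' c' - \epsilon c$ is invertible in $\Z_N$, and by trying each of the (at most) four sign choices the extractor recovers $d = (s - \delta s')(\delta\epsilon' c' - \epsilon c)^{-1} \bmod N$.

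For SHVZK, given a challenge $c\in\zplus$ the simulator samples $s \gets_R \Z_N$, computes $\pm\R = \pm([s]\Pc + [c]\Q)$ on any fixed lift of $\pm\Q$ to $\J$, and outputs the transcript $(\pm\R, c, s)$. By completeness this transcript is accepting. In an honest execution $r$ is uniform in $\Z_N^*$, so $s = r - cd \bmod N$ is uniform on $\Z_N \setminus \{-cd \bmod N\}$, whereas the simulator produces $s$ uniform on $\Z_N$; the resulting statistical distance is $1/N$, which is negligible in the security parameter, so SHVZK holds. The main obstacle is the special soundness step: the sign-ambiguous verification admits four candidate lifted equations, and in every branch one must argue that the coefficient of the secret $d$ is a unit in $\Z_N$. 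This is precisely what motivates drawing challenges from $\zplus$ rather than $\Z_N$\dash as the bulleted discussion preceding the proposition already signals\dash and the LSB parity argument using the oddness of $N$ is what makes extraction go through uniformly across all branches.
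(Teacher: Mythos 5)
Your three-part structure is the paper's, and two of the three parts match it closely. Completeness is identical. For special soundness your argument is the same as the paper's, just with the four sign branches written out explicitly and the \({\tt LSB}\)-parity argument (which the paper states once, before the proposition) repeated inside the extraction step; the paper compresses this to ``\(c_0\pm c_1\not\equiv 0\) because both lie in \(\zplus\)'', but the content is the same and your version is, if anything, more careful.

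The honest-verifier zero-knowledge step is where you genuinely diverge, and in two ways. First, the paper's simulator \emph{restarts} when \(\R=\Ocal\) and then proves by an explicit count that the simulated and real distributions are \emph{identical} (\((N^2-1)/2\) tuples, each with probability \(2/(N^2-1)\)); you skip the restart and settle for statistical distance \(1/N\). That is harmless for the Fiat--Shamir application, but the sigma-protocol definition the paper invokes (\cite[\S6]{hazaylindell}) asks for perfect special HVZK, so your version proves a slightly weaker statement than the one claimed; the fix is exactly the paper's one-line restart. Second, and more substantively, your claim that the simulator may use ``any fixed lift of \(\pm\Q\)'' is wrong. With the lift \(-\Q\) the simulator outputs \((\pm[s-cd]\Pc, c, s)\), which after reparametrization is the transcript distribution of a prover holding the witness \(-d\); for fixed \(c\) with \(cd\not\equiv 0\) its support meets the support of the honest-\(d\) distribution in a single tuple, so the two are statistically far apart, not close. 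This is precisely why the paper attaches the footnote ``we may need two attempts to construct \(\Sc\)'': the choice of lift is not a free parameter. You should either address which lift is taken (and why a simulator that cannot distinguish \(d\) from \(-d\) still suffices for the HVZK notion being used), or at minimum not assert that the choice is irrelevant.
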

\begin{proof}
    We prove the required properties (see~\cite[\S6]{hazaylindell}). 
    
    \textbf{Completeness:} 
    If the protocol is followed, 
    then \(r=s+cd\), 
    and therefore
    \([r]\Pc=[s]\Pc+[c]\Q\) on \(\J\).
    Mapping to \(\K\), it follows that \(\pm\R=\pm([s]\Pc+[c]\Q)\).

    \textbf{Special soundness:}
    Let \(\left(\pm\R,c_0,s_0\right)\) 
    and \(\left(\pm\R,c_1,s_1\right)\) be two valid transcripts
    such that \(c_0\neq c_1\). 
    By verification,
    each \(s_i \equiv \pm r\pm c_id \pmod{N}\),
    so \(s_0\pm s_1 \equiv \left(c_0\pm c_1\right)d \pmod{N}\), 
    where the signs are chosen to cancel \(r\).
    Now
    \(c_0\pm c_1\not\equiv0\pmod{N}\)
    because
    \(c_0\) and \(c_1\) are both in \(\ZZ_N^+\),
    so we can extract a witness 
    \(d\equiv\left(s_0\pm s_1\right)\left(c_0\pm c_1\right)^{-1}\pmod{N}\).

    \textbf{Honest-verifier zero-knowledge:}
    A simulator \(\Sc\) generates \(c\leftarrow_R\Z_N^+\)
    and sets \(s\leftarrow_R\Z_N\)
    and \(\R\leftarrow [s]\Pc+[c]\Q\).%
    \footnote{As we only know \(\Q\) up to sign, we may need
    two attempts to construct \(\Sc\).}
    If \(\R=\Ocal\), it restarts.
    It outputs
    \(\left(\pm\R,c,s\right)\). 
    As in~\cite[Lemma~5]{PS00}, 
    we let 
    \begin{align*}
        \delta 
        & =
        \left\{
            \left(\pm\R, c, s\right)
            :
                c \in_R \Z_N^+ \,,
                r \in_R \Z_N^* \,,
                \pm\R = \pm[r]\Pc \,,
                s = r-cd
        \right\}
        \,,
        \\
        \delta' 
        & =
        \left\{
            \left(\pm\R, c, s\right)
            :
                c \in_R \Z_N^+ \,,
                s \in_R \Z_N \,,
                \R = [s]\Pc+[c]\Q \,,
                \R \neq \Ocal
        \right\}
    \end{align*}
    be the distributions
    of honest and simulated signatures, respectively.
    The elements of \(\delta\) and \(\delta'\) are the same. 
    First, consider \(\delta\). 
    There are exactly \(N-1\) choices for \(r\), 
    and exactly \((N+1)/2\) for \(c\);
    all of them lead to distinct tuples.
    There are thus \((N^2-1)/2\) possible tuples, 
    all of which have probability \(2/(N^2-1)\) of occurring.
    Now consider \(\delta'\). Again,
    there are \((N+1)/2\) choices for \(c\).
    We have \(N\) choices for \(s\), 
    exactly one of which leads to \(\R=\Ocal\).
    Thus, given \(c\), 
    there are \(N-1\) choices for \(s\). 
    We conclude that \(\delta'\) also contains \((N^2-1)/2\) possible tuples, 
    which all have probability \(2/(N^2-1)\) of occurring.
    \qed
\end{proof}

\subsection{Applying Fiat--Shamir}\label{subsec:schnorr}

Applying the Fiat--Shamir transform~\cite{FS86} to \xid
yields
a signature scheme \xsig.
We will need a hash function
\(\hashplus{H}:\{0,1\}^* \to \zplus\),
which we define by taking a hash function 
\(H:\{0,1\}^* \to \Z_N\) and then setting \(\hashplus{H}\) 
by
\[
    \hashplus{H}(M) 
    \longmapsto
    \begin{cases}
        H(M)  &\text{if } {\tt LSB}(H(M)) = 0
        \\
        -H(M) &\text{if } {\tt LSB}(H(M)) = 1
    \end{cases}.
\]
The \xsig signature scheme is defined as follows:
\begin{enumerate}
	\item 
        To sign a message \(M\in\{0,1\}^*\) with private key \(d\in\Z_N\) 
	    and public key \(\pm\Q\in\K\), 
	    the \prover sets 
        \( r \gets_R \Z_N^* \), 
        \(\pm\R \gets \pm[r]\R\), 
        \(h \gets \hashplus{H}(\pm\R\concat M)\),
        and \(s \gets (r-hd)\bmod{N}\),
	    and sends \(\left(\pm\R\concat s\right)\) to the \verifier.
	\item 
        To verify a signature \(\left(\pm\R\concat s\right)\in\K\times\Z_N\) 
        on a message \(M\in\{0,1\}^*\)
        with public key \(\pm\Q\in\K\),
        the \verifier sets 
        \( h \gets \hashplus{H}(\pm\R\concat M)\),
        \( \pm\T_0 \gets \pm[s]\Pc\),
        and \( \pm\T_1 \gets \pm[h]\Q\),
        and accepts if and only if 
        \(\pm\R\in\left\{\pm(\T_0+\T_1),\pm(\T_0-\T_1)\right\}\).
\end{enumerate}
Proposition~\ref{thm:schfs}
asserts that
the security properties of \xid
carry over to \xsig.

\begin{proposition}\label{thm:schfs}
    In the random oracle model, if an existential forgery of the
    \xsig signature scheme under an adaptive chosen message attack
    has non-negligible probability of success, 
    then the DLP in \(\J\) can be solved in polynomial time.
\end{proposition}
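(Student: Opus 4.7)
The plan is to follow the classical Fiat--Shamir/Pointcheval--Stern strategy, leveraging the sigma-protocol structure of \xid established in Proposition~\ref{thm:sigma}. Concretely, I would reduce from any forger \(\mathcal{F}\) against \xsig to a DLP solver on the instance \((\pm\Pc,\pm\Q)\), where \(\Q=[d]\Pc\) for the unknown private key \(d\). The reduction will run \(\mathcal{F}\) twice, answering its signing and hash queries consistently, and extract \(d\) from two forgeries that share a commitment but have different challenges.

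First I would show how to simulate the signing oracle without knowing \(d\). Using the honest-verifier zero-knowledge simulator \(\Sc\) from Proposition~\ref{thm:sigma}, and modelling \(\hashplus{H}\) as a random oracle, the reduction samples \((\pm\R,h,s)\) with the same distribution as honest signatures and then programs \(\hashplus{H}(\pm\R\concat M)\gets h\). Aborting whenever \(\hashplus{H}\) has already been queried at \((\pm\R\concat M)\) introduces only a negligible loss for any polynomial number \(q_S,q_H\) of signature and hash queries, by the usual birthday-style bound on collisions in the commitment space. Next I would invoke the forking lemma of Pointcheval--Stern~\cite{PS96,PS00}: rerunning \(\mathcal{F}\) with the same random tape and simulated signing answers but with the random oracle reprogrammed at the critical query, we obtain with non-negligible probability two valid forgeries \((\pm\R\concat s_0)\) and \((\pm\R\concat s_1)\) on a common message \(M\) with \(h_0\ne h_1\). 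Since \(h_0,h_1\in\zplus\), we have \(h_0\pm h_1\not\equiv 0\pmod N\), and the special-soundness extractor from Proposition~\ref{thm:sigma} recovers \(d\equiv(s_0\pm s_1)(h_0\pm h_1)^{-1}\pmod N\) in polynomial time.

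The main obstacle is the probability accounting induced by the restricted challenge space \(\zplus\), which contains only \((N+1)/2\) elements rather than \(N\); this shrinks the min-entropy of the reprogrammed hash outputs by one bit and produces a concrete security bound slightly weaker than the standard Schnorr reduction, matching the qualitative remark in the introduction. A secondary subtlety is that verification only ensures \(\pm\R=\pm([s]\Pc\pm[h]\Q)\), so the relation implicitly extracted from a single transcript is \(s\equiv \pm r \pm hd \pmod N\) with signs that are a priori unknown; the forking argument must track the signs across the two runs so that the correct linear combination cancels \(r\), exactly as handled in the special-soundness argument of Proposition~\ref{thm:sigma}. Putting these pieces together, a polynomial-time forger with non-negligible advantage yields a polynomial-time DLP solver on \(\J\) with non-negligible advantage, proving the claim.
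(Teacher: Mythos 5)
Your proposal is correct and follows exactly the route the paper takes: the paper's proof of this proposition is simply a citation to the standard Fiat--Shamir/Pointcheval--Stern argument (\cite[Theorem~13]{PS96}, \cite[\S3.2]{PS00}), which you have unpacked faithfully, including the two points specific to this setting (the halved challenge space \(\zplus\) and the sign-tracking in the extraction, both of which the paper delegates to Proposition~\ref{thm:sigma}). No gaps.
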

\begin{proof}
    This is the standard proof of applying the Fiat--Shamir transform
    to a sigma protocol:
    see~\cite[Theorem~13]{PS96} or~\cite[\S3.2]{PS00}.
    \qed
\end{proof}

\subsection{The \xsign signature scheme}
\label{subsec:xsign}

Moving towards the real world,
we slightly alter the \xsig protocol with some pragmatic choices,
following Bernstein et al.~\cite{EdDSA}:
\begin{enumerate}
    \item
        We replace the randomness \(r\) by the output
        of a pseudo-random function,
        which makes the signatures deterministic.
    \item
        We include the public key \(\pm\Q\) in the generation of the challenge,
        to prevent attackers from 
        attacking multiple public keys at the same time.
    \item
        We compress and decompress points on \(\K\) where necessary.
\end{enumerate}
The resulting signature scheme, \xsign, 
is summarized in Algorithm~\ref{alg:xsign}.

\begin{algorithm}
    \setstretch{1.2} 
    \caption{The \xsign signature scheme}
    \label{alg:xsign}
    \SetKwProg{Function}{function}{}{}
    \Function{{\tt keypair}}{
        \KwIn{()}
        \KwOut{%
            \((\overline{\pm\Q}\concat\left(d'\concat d''\right))\): 
            a compressed public key \(\overline{\pm \Q}\in\{0,1\}^{256}\)
            where \(\pm \Q\in \Kum\),
            and a private key \(\left(d'\concat d''\right)\in\left(\{0,1\}^{256}\right)^2\)
        }
        \( d \gets \texttt{Random}(\{0,1\}^{256}) \)
        \;
        \( \left(d'\concat d''\right) \gets H(d) \)\label{line:dprime}
        \;
        \(\pm \Q \gets \texttt{Ladder}(d',\pm \Pc) \)
        \tcp*{\(\pm \Q = \pm[d']\Pc\)}
        \(\overline{\pm \Q} \gets \texttt{Compress}(\pm \Q)\)
        \;
        \Return{\((\overline{\pm \Q}\concat\left(d'\concat d''\right))\)}
    }
    \Function{{\tt sign}}{
        \KwIn{%
            \(d',d''\in\{0,1\}^{256}\),
            \(\overline{\pm \Q}\in\{0,1\}^{256}\),
            \(M\in\{0,1\}^*\)
        }
        \KwOut{
            \((\overline{\pm \R}\concat s)\in\left(\{0,1\}^{256}\right)^2\)
        }
        \(r \gets H(d''\concat M)\)\label{line:r}
        \;
        \(\pm \R \gets \texttt{Ladder}(r,\pm \Pc)\)
        \tcp*{\(\pm \R = \pm[r]\Pc\)}
        \(\overline{\pm \R} \gets \texttt{Compress}(\pm \R)\)
        \;
        \(h \gets \hashplus{H}(\overline{\pm \R} \concat \overline{\pm\Q} \concat M)\)
        \label{line:h}
        \;
        \(s \gets (r - hd') \bmod{N}\)
        \;
        \Return{\((\overline{\pm \R}\concat s)\)}
    }
    \Function{{\tt verify}}{
        \KwIn{
            \(M \in \{0,1\}^*\),
            the compressed public key \(\overline{\pm \Q} \in \{0,1\}^{256}\),
            and 
            a putative signature 
            \((\overline{\pm \R}\concat s) \in \left(\{0,1\}^{256}\right)^2\) 
        }
        \KwOut{%
            \texttt{True} if \((\overline{\pm \R}\concat s)\) is a valid signature
            on \(M\) under \(\overline{\pm \Q}\),
            \texttt{False} otherwise
        }
        \(\pm \Q \gets \texttt{Decompress}(\overline{\pm \Q})\)
        \;
        \If{\(\pm \Q = \bot\)}{
            \Return{{\rm\tt False}}
        }
        \(h \gets \hashplus{H}(\overline{\pm \R} \concat \overline{\pm\Q}\concat M)\)
        \;
        \(\pm \T_0 \gets \texttt{Ladder}(s,\pm \Pc)\)
        \tcp*{\(\pm \T_0 = \pm[s]\Pc\)}
        \(\pm \T_1 \gets \texttt{Ladder}(h,\pm \Q)\)
        \tcp*{\(\pm \T_1 = \pm[h]\Q\)}
        \(\pm \R \gets \texttt{Decompress}(\overline{\pm \R})\)
        \;
        \If{\(\pm \R = \bot\)}{%
            \Return{{\rm\tt False}}
        }
        \(v \gets \texttt{Check}(\pm \T_0, \pm \T_1, \pm \R)\)
        \tcp*{is \(\pm \R = \pm \left(\T_0 \pm \T_1\right)\)?}
        \Return{\(v\)}
    }
\end{algorithm}

\paragraph{Unified keys.} 
Signatures are entirely computed and verified on \(\K\),
which is also the natural setting for Diffie--Hellman key exchange.
We can therefore use identical key pairs for Diffie--Hellman
and for \xsign signatures.
This significantly simplifies the implementation of cryptographic
libraries, as we no longer
need arithmetic for the two distinct objects \(\J\) and \(\K\).
Technically, there is no reason not to use a single
key pair for both key exchange and signing;
but one should be very careful in doing so, 
as using one key across multiple protocols 
could potentially lead to attacks.
The primary interest of this aspect of \xsign 
is not necessarily in reducing the number of keys,
but in unifying key formats and reducing the size of the trusted code
base.

\paragraph{Security level.}
The security reduction to the discrete logarithm problem is
almost identical to the case of Schnorr signatures~\cite{PS96}.
Notably, the challenge space has about half the size 
(\(\zplus\) versus \(\Z_N\)) while the proof of soundness
computes either \(s_0+s_1\) or \(s_0-s_1\).
This results in a slightly weaker reduction, as should be expected
by moving from \(\J\) to \(\K\) and by weakening verification.
By choosing \(\log_2 N\approx 256\) we obtain a scheme with about
the same security level as state-of-the-art schemes
(eg. EdDSA combined with Ed25519).
This could be made more precise (cf.~\cite{PS00}),
		 but we do not provide this analysis here.

\paragraph{Key and signature sizes.}
Public keys fit into 32 bytes
in both the genus 1 and genus 2 settings. 
This is standard for Montgomery curves;
for Kummer surfaces it requires a new compression technique,
which we present in~\S\ref{sec:comp}. 
In both cases \(\log_2 N<256\),
which means that signatures \((\pm\R\concat s)\) fit in 64 bytes.

\paragraph{Twist security.}
Rational points on \(\K\) correspond to pairs of points 
on either \(\J\) or its quadratic twist.
As opposed to Diffie--Hellman,
in \xsign scalar multiplications with
secret scalars are \emph{only} performed on the public parameter \(\pm\Pc\),
which is chosen as the image of large prime order element of \(\J\).
Therefore \(\J\) is not technically required to have a secure twist,
unlike in the modern Diffie--Hellman setting.
But if \(\K\) is also used for key exchange
(which is the whole point!),
then twist security is crucial.
We therefore strongly recommend twist-secure parameters for \xsign
implementations.

\paragraph{Hash function.} 
The function \(H\) can be any hash function with at least a \(\log_2\sqrt{N}\)-bit security level
and at least \(2\log_2 N\)-bit output.
Throughout this paper we take \(H\)
to be the extendable output function {\tt SHAKE128}~\cite{FIPS202} 
with fixed 512-bit output.
This enables us to implicitly use \(H\) as a function mapping
into either 
\(\Z_N\times \{0,1\}^{256}\) 
(\eg Line~\ref{line:dprime} of Algorithm~\ref{alg:xsign}),
\(\Z_N\) 
(\eg Line~\ref{line:r} of Algorithm~\ref{alg:xsign}),
or
\(\zplus\) 
(\eg Line~\ref{line:h} of Algorithm~\ref{alg:xsign},
by combining it with a conditional negation)
by appropriately reducing (part of) the output modulo \(N\).

\paragraph{Signature compression.} 
Schnorr mentions 
in~\cite{schnorr} 
that signatures \(\left(\R\concat s\right)\) 
may be compressed to \(\left(H(\R\concat\Q\concat M)\concat s\right)\),
taking only the first 128 bits of the hash,
thus reducing signature size from 64 to 48 bytes.
This is possible because we can recompute \(\R\) 
from \(\Pc\), \(\Q\), \(s\), and \(H(\R\concat\Q\concat M)\).
However, on \(\K\) we cannot recover \(\pm\R\) from \(\pm\Pc\),
\(\pm\Q\), \(s\), and 
\(H(\pm\R\concat\pm\Q\concat M)\), 
so Schnorr's compression technique is not an option for us.

\paragraph{Batching.}
Proposals for batch signature verification typically 
rely on the group structure,
verifying random linear combinations of points~\cite{NMVR94,EdDSA}.
Since \(\K\) has no group structure, these batching algorithms
are not possible.

\paragraph{Scalar multiplication for verification.}
Instead of computing the full point \([s]\Pc+[c]\Q\) 
with a two-dimensional multiscalar multiplication operation, 
we have to compute \(\pm[s]\Pc\) and \(\pm[c]\Q\) separately. 
As a result we are unable to use standard tricks for speeding
up two-dimensional scalar multiplications (\eg~\cite{G84}), 
resulting
in increased run-time.
On the other hand, it has the benefit of relying on the already implemented
{\tt Ladder} function, mitigating the need for a separate algorithm,
and is more memory-friendly.
Our implementations show a significant decrease in stack usage,
at the cost of a small loss of speed (see~\S\ref{sec:imp}).

\section{
    Implementing \xsign with elliptic curves
}
\label{sec:elliptic}

Our first concrete instantiation of \xsign
uses the Kummer variety of an elliptic curve,
which is just the \(x\)-line \(\PP^1\).

\subsection{Montgomery curves}\label{subsec:ecmont}

Consider the elliptic curve in Montgomery form
\[
	E_{AB}/\FF_p : By^2 = x(x^2 + Ax + 1)
    \,,
\]
where \(A^2\neq4\) and \(B\neq0\).
The map \(E_{AB} \to \K=\PP^1\)
defined by
\[
	\Pc = (X : Y : Z)
    \longmapsto
	\pm\Pc 
    =
	\begin{cases}
		(X : Z) &\text{if } Z\neq0 \\
		(1 : 0) &\text{if } Z=0
	\end{cases}
\]
gives rise to efficient \(x\)-only arithmetic on \(\PP^1\) (see~\cite{M87}).
We use the {\tt Ladder} specified in~\cite[Alg. 1]{DHH+15}.
Compression uses Bernstein's map
\[
	{\tt Compress} : 
    (X : Z) \in \PP^1(\FF_p) 
    \longmapsto 
    XZ^{p-2} \in \FF_p 
    \,,
\]
while decompression is the near-trivial
\[
    {\tt Decompress} :
    x \in \FF_p 
    \longmapsto 
    (x : 1) \in \PP^1(\FF_p)
    \,.
\]
Note that {\tt Decompress} never returns \(\bot\),
and that 
\({\tt Decompress}({\tt Compress}((X : Z))) = (X:Z)\) whenever \(Z\neq0\)
(however, the points~\((0:1)\) and~\((1:0)\)
should never appear as public keys or signatures).

\subsection{Signature verification}\label{subsec:ecverif}

It remains to define 
the \(\texttt{Check}\) operation for Montgomery curves.
In the final step of verification 
we are given \(\pm\R\), \(\pm\Pc\), and \(\pm\Q\) in \(\PP^1\),
and we need to check whether
\(\pm\R \in \left\{\pm(\Pc+\Q),\pm(\Pc-\Q)\right\}\).
Proposition~\ref{prop:elliptic-verif}
reduces this to checking a quadratic relation 
in the coordinates of \(\pm\R\), \(\pm\Pc\), and \(\pm\Q\).

\begin{proposition}
    \label{prop:elliptic-verif}
    Writing \((X^P:Z^P) = \pm P\) for \(P\) in \(E_{AB}\), etc.:
    If 
    \(P\), \(Q\), and \(R\) are points on \(E_{AB}\),
    then 
    \(
        \pm R \in \big\{ {\pm(P+Q)}, {\pm(P-Q)} \big\}
    \)
    if and only if
    \begin{equation}
        \label{eq:elliptic-verif}
        B_{ZZ}(X^R)^2 - 2B_{XZ}X^RZ^R + B_{XX}(Z^R)^2 = 0
    \end{equation}
    where
    \begin{align}
        \label{eq:B_XX}
        B_{XX} & = \big(X^PX^Q - Z^PZ^Q\big)^2
        \,,
        \\
        \label{eq:B_XZ}
        B_{XZ}
        & = 
        \big(X^PX^Q + Z^PZ^Q\big)
        \big(X^PZ^Q + Z^PX^Q\big) 
        + 
        2AX^PZ^PX^QZ^Q
        \,,
        \\
        \label{eq:B_ZZ}
        B_{ZZ} & = \big(X^PZ^Q - Z^PX^Q\big)^2
        \,.
    \end{align}
\end{proposition}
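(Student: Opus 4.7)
\medskip\noindent\textbf{Proof plan.}
The strategy is to recognise equation~(\ref{eq:elliptic-verif}) as the homogenisation of a quadratic in $T$ whose two roots are precisely $x_{P+Q}$ and $x_{P-Q}$. Since a point on $E_{AB}$ is determined up to $\pm$ by its $x$-coordinate, the condition $\pm R \in \{\pm(P+Q), \pm(P-Q)\}$ is then equivalent to $x_R = X^R/Z^R$ being a root of that quadratic.

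First I would reduce to affine coordinates, assuming for the moment that $Z^P, Z^Q, Z^R$ are all nonzero and that $\pm P \neq \pm Q$; write $x_P = X^P/Z^P$ and similarly for $Q$ and $R$. Montgomery's classical pseudo-addition formulas, obtained by intersecting the line through $P$ and $Q$ with the curve and eliminating $y$, give closed-form expressions for $x_{P+Q}$ and $x_{P-Q}$; summing and multiplying them yields
\begin{align*}
    x_{P+Q} + x_{P-Q} &= \frac{2\bigl[(x_Px_Q + 1)(x_P + x_Q) + 2Ax_Px_Q\bigr]}{(x_P - x_Q)^2}, \\
    x_{P+Q} \cdot x_{P-Q} &= \frac{(x_Px_Q - 1)^2}{(x_P - x_Q)^2}.
\end{align*}
By Vieta's formulas, $x_{P+Q}$ and $x_{P-Q}$ are therefore the two roots of
$$(x_P - x_Q)^2 T^2 - 2\bigl[(x_Px_Q + 1)(x_P + x_Q) + 2Ax_Px_Q\bigr] T + (x_Px_Q - 1)^2.$$
Multiplying through by $(Z^PZ^Q)^2$, the coefficients become precisely $B_{ZZ}$, $-2B_{XZ}$, and $B_{XX}$ of (\ref{eq:B_XX})--(\ref{eq:B_ZZ}); substituting $T = X^R/Z^R$ and clearing by $(Z^R)^2$ recovers exactly~(\ref{eq:elliptic-verif}).

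It then remains to dispatch the degenerate cases: $P = \Ocal$, $Q = \Ocal$, $R = \Ocal$, and $\pm P = \pm Q$. In the last case, $B_{ZZ}$ vanishes and the homogenised quadratic degenerates to a linear equation whose solution set in $\PP^1$ is $\{x_{2P}, \infty\}$, matching the expected $\{\pm 2P, \pm\Ocal\}$; this reduces to checking $B_{XX}/(2B_{XZ}) = x_{2P}$, which is exactly the Montgomery doubling formula. The identity-point cases are handled by direct substitution into~(\ref{eq:elliptic-verif}). Throughout, the real work is bookkeeping of projective coordinates; the only conceptual input is the classical Montgomery differential-addition identity, after which the identification with (\ref{eq:elliptic-verif}) is immediate.
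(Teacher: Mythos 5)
Your proposal follows essentially the same route as the paper's proof: both derive the sum and product of $x_{P+Q}$ and $x_{P-Q}$ from Montgomery's differential-addition formulas, identify the resulting (homogenised) quadratic with coefficients $B_{ZZ}$, $-2B_{XZ}$, $B_{XX}$, and conclude via the fact that a nonzero degree-$2$ form on $\PP^1$ cuts out exactly the two points $\pm(P+Q)$ and $\pm(P-Q)$. Your treatment of the degenerate cases ($\pm P=\pm Q$ and the identity) is if anything slightly more explicit than the paper's one-line remark, so the argument is correct and matches the published proof.
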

\begin{proof}
    Let
    \(S = (X^S:Z^S) = \pm(P+Q)\)
    and
    \(D = (X^D:Z^D) = \pm(P-Q)\).
    If we temporarily assume
    \(\pm 0 \not= \pm P \not= \pm Q \not= \pm 0\)
    and put \(x_P = X^P/Z^P\), etc., then
    the group law on \(E_{AB}\) gives us
    \( 
        x_Sx_D  = (x_Px_Q - 1)^2/(x_P - x_Q)^2 
    \)
    and
    \(
        x_S + x_D = 2((x_Px_Q + 1)(x_P + x_Q) + 2Ax_Px_Q)
    \).
    Homogenizing, we obtain
    \begin{align}
        \label{eq:elliptic-verif-lhs}
        \left(
            X^SX^D 
            :
            X^SZ^D+Z^SX^D 
            :
            Z^SZ^D 
        \right)
        &= 
        \left(
            \lambda B_{XX} 
            :
            \lambda 2B_{XZ}
            :
            \lambda B_{ZZ}
        \right)
        \,.
    \end{align}
    One readily verifies that Equation~\eqref{eq:elliptic-verif-lhs}
    still holds even when the temporary assumption does not 
    (that is, when \(\pm P = \pm Q\) or \(\pm P = \pm 0\) or \(\pm Q = \pm 0\)).
    Having degree 2,
    the homogeneous polynomial \(B_{ZZ}X^2 - B_{XZ}XZ + B_{XX}Z^2\)
    cuts out two points in \(\PP^1\)
    (which may coincide);
    by Equation~\eqref{eq:elliptic-verif-lhs},
    they are \(\pm(P+Q)\)
    and \(\pm(P-Q)\),
    so if \((X^R:Z^R)\) satisfies Equation~\eqref{eq:elliptic-verif}
    then it must be one of them.
    \qed
\end{proof}

\vspace{-0.2cm}
\begin{algorithm}[h]
    \setstretch{1.1}
    \caption{Checking the verification relation for \(\PP^1\)}
    \label{alg:mont-check}
    \SetKwProg{Function}{function}{}{}
    \SetKwInOut{Cost}{Cost}
    \SetKwData{LHS}{LHS}
    \Function{{\tt Check}}{
        \KwIn{%
            \(\pm \Pc\), \(\pm \Q\), \(\pm \R=(x:1)\) 
                            in \(\PP^1\)
                            images of points
                            of \(E_{AB}(\FF_p)\)
        }
        \KwOut{%
            {\bf True} if \(\pm\R \in \{\pm(\Pc+\Q), \pm(\Pc-\Q)\}\),
            {\bf False} otherwise
        }
        \Cost{%
            \(
            8\MUL +
            3\SQR + 
            1\MLC + 
            8\ADD +
            4\SUB
            \)
        }
        \(
            (B_{XX}, B_{XZ}, B_{ZZ})
            \gets 
                        {\tt BValues}(\pm\Pc,\pm\Q)
        \)
        \;
        %
        %
        \lIf{\( B_{XX}x^2 - B_{XZ}x + B_{ZZ} = 0 \)}{%
                \Return{{\rm \bf True}}
        }
        \lElse{\Return{{\rm\bf False}}}
    }
    \Function{{\tt BValues}}{
        \KwIn{%
            \(\pm \Pc=(X^P:Z^P)\), \(\pm \Q=(X^Q:Z^Q)\) in \(\K(\FF_p)\)
        }
        \KwOut{%
            \(\left(B_{XX}(\pm P,\pm Q),B_{XZ}(\pm P,\pm Q),B_{ZZ}(\pm
            P, \pm Q)\right)\) in \(\FF_p^3\)
        }
        \Cost{%
            \(
            6\MUL +
            2\SQR + 
            1\MLC + 
            7\ADD +
            3\SUB
            \)
        }
        \ifapp
        \tcp{%
            See Algorithm~\ref{alg:bvalues}
            and Proposition~\ref{prop:elliptic-verif}
        }
        \else
        \tcp{%
            Use Equations~\eqref{eq:B_XX},
            \eqref{eq:B_XZ},
            and~\eqref{eq:B_ZZ}
            in Proposition~\ref{prop:elliptic-verif}
        }
        \fi
    }
\end{algorithm}

\subsection{Using cryptographic parameters}\label{subsec:ecparam}

We use the elliptic curve 
\(E / \FF_p : y^2 = x^3 + 486662x^2 + x\)
where \(p=2^{255}-19\),
which is commonly 
referred to as {\tt Curve25519}~\cite{Ber06}.
Let \(\Pc\in E(\FF_p)\) be such that \(\pm\Pc=(9:1)\).
Then \(\Pc\) has order \(8N\), where 
\[
    N = 2^{252} + 27742317777372353535851937790883648493
\]
is prime.
The \xDBLADD operation requires us to store \((A+2)/4=121666\),
and we implement optimized multiplication by this constant.
In~\cite[\S3]{Ber06} Bernstein
sets and clears some bits of the private key, 
also referred to as ``clamping''.
This is not necessary in \xsign, 
but we do it anyway in {\tt keypair}
for compatibility.

\section{
    Implementing \xsign with Kummer surfaces
}
\label{sec:kummer-arithmetic}

A number of cryptographic protocols
that have been successfully implemented with Montgomery curves
have seen substantial practical improvements
when the curves are replaced with \emph{Kummer surfaces}.
From a general point of view,
a Kummer surface is 
the quotient of some genus-2 Jacobian \(\Jac\) by \(\pm 1\);
geometrically it is a
surface in \(\PP^3\) with sixteen point singularities, called \emph{nodes},
which are the images in \(\Kum\) of the 2-torsion points of \(\Jac\)
(since these are precisely the points fixed by \(-1\)).
From a cryptographic point of view,
a Kummer surface is just a 2-dimensional analogue
of the \(x\)-coordinate used in Montgomery curve arithmetic.

The algorithmic and software aspects of efficient Kummer surface arithmetic
have already been covered in great detail elsewhere
(see eg.~\cite{Gaudry}, \cite{BCLS14}, and~\cite{RSSB16}).
Indeed, the Kummer scalar multiplication algorithms and software 
that we use in our signature implementation
are identical to those described in~\cite{RSSB16},
and use the cryptographic parameters proposed by 
Gaudry and Schost~\cite{gaudry-schost}.

This work includes two entirely new Kummer algorithms
that are essential for our signature scheme:
verification relation testing 
(\texttt{Check}, Algorithm~\ref{alg:kummer-check})
and compression/decompression
(\texttt{Compress} and \texttt{Decompress}, 
Algorithms~\ref{alg:compress} and~\ref{alg:decompress}).
Both of these new techniques require a fair amount of technical
development,
which we begin in this section 
by recalling the basic Kummer equation and constants,
and deconstructing the pseudo-doubling operation into a sequence of
surfaces and maps that will play important roles later.
Once the scene has been set,
we will describe our signature verification algorithm in~\S\ref{sec:verif}
and our point compression scheme in~\S\ref{sec:comp}.
The reader primarily interested in the resulting performance
improvements may wish to skip directly to~\S\ref{sec:imp} on first reading.

The {\tt Check}, {\tt Compress}, and {\tt Decompress} 
algorithms
defined below
require the following subroutines:
\begin{itemize}
    \ifapp 
    \item \(\PPmul\) implements a 4-way parallel multiplication.
        It takes a pair of vectors 
        \((x_1,x_2,x_3,x_4)\)
        and
        \((y_1,y_2,y_3,y_4)\)
        in \(\FF_p^4\),
        and returns \((x_1y_1,x_2y_2,x_3y_3,x_4y_4)\).
    \fi
    \ifapp 
    \item
        \(\PPsqr\) implements a 4-way parallel squaring.
        Given a vector 
        \((x_1,x_2,x_3,x_4)\) in \(\FF_p^4\),
        it returns \((x_1^2,x_2^2,x_3^2,x_4^2)\).
    \fi
    \item
        \(\Hadamard\) implements a Hadamard transform.
        Given a vector 
        \((x_1,x_2,x_3,x_4)\) in \(\FF_p^4\),
        it returns 
        \(
            (
            x_1 + x_2 + x_3 + x_4,
            x_1 + x_2 - x_3 - x_4,
            x_1 - x_2 + x_3 - x_4,
            x_1 - x_2 - x_3 + x_4
            )
        \).
    \item
        \(\Dotprod\) computes the sum of a 4-way multiplication.
        Given a pair of vectors 
        \((x_1,x_2,x_3,x_4)\)
        and
        \((y_1,y_2,y_3,y_4)\)
        in \(\FF_p^4\),
        it returns \(x_1y_1+x_2y_2+x_3y_3+x_4y_4\).
\end{itemize}

\subsection{Constants}
\label{sec:constants}

Our Kummer surfaces are defined by four fundamental constants
\(\an\), \(\bn\), \(\cn\), \(\dn\)
and four dual constants \(\ad\), \(\bd\), \(\cd\), and \(\dd\),
which are related by
\begin{align*}
    2\ad^2 & = \an^2 + \bn^2 + \cn^2 + \dn^2 \,,
    \\
    2\bd^2 & = \an^2 + \bn^2 - \cn^2 - \dn^2 \,,
    \\
    2\cd^2 & = \an^2 - \bn^2 + \cn^2 - \dn^2 \,,
    \\
    2\dd^2 & = \an^2 - \bn^2 - \cn^2 + \dn^2 \,.
\end{align*}
We require all of the \(\unsquaredn_i\) and \(\unsquaredd_i\) to be
nonzero.
The fundamental constants determine the dual constants up to sign,
and vice versa.
These relations remain true when we exchange
the \(\unsquaredn_i\) with the \(\unsquaredd_i\);
we call this ``swapping \(x\) with \(\dualof{x}\)'' operation ``dualizing''.
To make the symmetry in what follows clear,
we define
\begin{align*}
    \aan & := \an^2 \,,
    &
    \epsilon_1 & := \bbn\ccn\ddn \,,
    &
    \kappa_1 & := \epsilon_1 + \epsilon_2 + \epsilon_3 + \epsilon_4 \,,
    \\
    \bbn & := \bn^2 \,,
    &
    \epsilon_2 & := \aan\ccn\ddn \,,
    &
    \kappa_2 & := \epsilon_1 + \epsilon_2 - \epsilon_3 - \epsilon_4 \,,
    \\
    \ccn & := \cn^2 \,,
    &
    \epsilon_3 & := \mu_1\mu_2\mu_4 \,,
    &
    \kappa_3 & := \epsilon_1 - \epsilon_2 + \epsilon_3 - \epsilon_4 \,,
    \\
    \ddn & := \dn^2 \,,
    &
    \epsilon_4 & := \mu_1\mu_2\mu_3 \,,
    &
    \kappa_4 & := \epsilon_1 - \epsilon_2 - \epsilon_3 + \epsilon_4 \,,
\end{align*}
along with their respective duals 
\(\dualof{\mu}_i\), 
\(\dualof{\epsilon}_i\),
and \(\dualof{\kappa}_i\).
Note that 
\[
    (\epsilon_1:\epsilon_2:\epsilon_3:\epsilon_4)
    =
    (1/\aan:1/\bbn:1/\ccn:1/\ddn)
\]
and
\(
    \squaredn_i\squaredn_j - \squaredn_k\squaredn_l
    =
    \squaredd_i\squaredd_j - \squaredd_k\squaredd_l
\)
for \(\{i,j,k,l\} = \{1,2,3,4\}\).
There are many clashing notational conventions for theta constants
in the cryptographic Kummer literature;
Table~\ref{tab:constant-dictionary} provides a dictionary for converting
between them.

Our applications use only the squared constants 
\(\squaredn_i\) and \(\squaredd_i\),
so only they need be in \(\FF_p\).
In practice we want them to be as ``small'' as possible,
both to reduce the cost of multiplying by them
and to reduce the cost of storing them.
In fact, it follows from their definition
that it is much easier to find simultaneously small \(\squaredn_i\)
and \(\squaredd_i\) than it is to find simultaneously small
\(\unsquaredn_i\) and \(\unsquaredd_i\)
(or a mixture of the two);
this is ultimately why we prefer the squared surface 
for scalar multiplication.
We note that if the \(\squaredn_i\) are very small,
then the \(\epsilon_i\) and \(\kappa_i\)
are also small,
and the same goes for their duals.
While we will never actually compute with the unsquared constants,
we need them to explain what is happening in the background below.

Finally, 
the Kummer surface equations involve some derived constants
\begin{align*}
    \En 
    & := 
    \frac{
        16 \an\bn\cn\dn \aad\bbd\ccd\ddd
    }{
        (\aan\ddn-\bbn\ccn)(\aan\ccn-\bbn\ddn)(\aan\bbn-\ccn\ddn)
    }
    \,,
\end{align*}
\begin{align*}
    \Fn 
    & := 
    2\frac{\aan\ddn+\bbn\ccn}{\aan\ddn-\bbn\ccn}
    \,,
    &
    \Gn 
    & := 
    2\frac{\aan\ccn+\bbn\ddn}{\aan\ccn-\bbn\ddn}
    \,,
    &
    \Hn 
    & := 
    2\frac{\aan\bbn+\ccn\ddn}{\aan\bbn-\ccn\ddn}
    \,,
\end{align*}
and their duals \(\Ed\), \(\Fd\), \(\Gd\), \(\Hd\).
We observe that 
\( \En^2 = \Fn^2 + \Gn^2 + \Hn^2 + \Fn\Gn\Hn - 4 \)
and
\( \Ed^2 = \Fd^2 + \Gd^2 + \Hd^2 + \Fd\Gd\Hd - 4 \).

\begin{table}
    \centering
    \begin{tabular}{c|c|c}
        Source & Fundamental constants & Dual constants \\
        \hline
        \cite{Gaudry}
        and
        \cite{BCLS14}
        &
        \((a\!:\!b\!:\!c\!:\!d)=(\an\!:\!\bn\!:\!\cn\!:\!\dn)\) 
        &
        \((A\!:\!B\!:\!C\!:\!D)=(\ad\!:\!\bd\!:\!\cd\!:\!\dd)\)
        \\
        \cite{BCHL13}
        &
        \((a\!:\!b\!:\!c\!:\!d)=(\an\!:\!\bn\!:\!\cn\!:\!\dn)\) 
        &
        \((A\!:\!B\!:\!C\!:\!D)=(\aad\!:\!\bbd\!:\!\ccd\!:\!\ddd)\)
        \\
        \cite{RSSB16} 
        &
        \((a\!:\!b\!:\!c\!:\!d)=(\aan\!:\!\bbn\!:\!\ccn\!:\!\ddn)\)
        &
        \((A\!:\!B\!:\!C\!:\!D)=(\aad\!:\!\bbd\!:\!\ccd\!:\!\ddd)\)
        \\
        \cite{cosset}
        &
        \((\alpha\!:\!\beta\!:\!\gamma\!:\!\delta)=(\aan\!:\!\bbn\!:\!\ccn\!:\!\ddn)\)
        &
        \((A\!:\!B\!:\!C\!:\!D)=(\aad\!:\!\bbd\!:\!\ccd\!:\!\ddd)\)
        \\
    \end{tabular}
    \caption{Relations between our theta constants and others in
    selected related work}
    \label{tab:constant-dictionary}
\end{table}

\subsection{Fast Kummer surfaces}
\label{subsec:fastkummer}

We compute all of the pseudoscalar multiplications in \xsign
on the so-called \textbf{squared Kummer surface}
\[
    \KChudnovsky :
    4\En^2\cdot\XCn_1\XCn_2\XCn_3\XCn_4 
    = 
    \left( 
        \begin{array}{c}
            \XCn_1^2 + \XCn_2^2 + \XCn_3^2 + \XCn_4^2
            - \Fn(\XCn_1\XCn_4+\XCn_2\XCn_3)
            \\
            {}
            - \Gn(\XCn_1\XCn_3 + \XCn_2\XCn_4) 
            - \Hn(\XCn_1\XCn_2 + \XCn_3\XCn_4) 
        \end{array}
    \right)^2
    \,,
\]
which was proposed for factorization algorithms
by the Chudnovskys~\cite{Chudnovsky--Chudnovsky},
then later for Diffie--Hellman by Bernstein~\cite{B-ecc06-talk}. 
Since \(\En\) only appears as a square,
\(\KChudnovsky\) is defined over \(\FF_p\).
The zero point on \(\KChudnovsky\) is \(\pm 0 = (\mu_1:\mu_2:\mu_3:\mu_4)\).
In our implementations
we used the \(\xDBLADD\) and Montgomery ladder
exactly as they were presented in~\cite[Algorithms~6-7]{RSSB16}%
\ifapp
\xspace (see also Algorithm~\ref{alg:ladder-kummer}).
\else
.
\fi
The pseudo-doubling \(\xDBL\)
on \(\KChudnovsky\)
is
\[
    \pm P = \big(\XCn_1^P:\XCn_2^P:\XCn_3^P:\XCn_4^P\big)
    \longmapsto
    \big(\XCn_1^{[2]P}:\XCn_2^{[2]P}:\XCn_3^{[2]P}:\XCn_4^{[2]P}\big)
    = \pm[2]P
\]
where
\begin{align}
    \label{eq:xDBL-1}
    \XCn_1^{[2]P} & = \epsilon_1(U_1 + U_2 + U_3 + U_4)^2
    \,,
    &
    U_1 & = \epsilond_1(\XCn_1^P + \XCn_2^P + \XCn_3^P + \XCn_4^P)^2
    \,,
    \\
    \label{eq:xDBL-2}
    \XCn_2^{[2]P} & = \epsilon_2(U_1 + U_2 - U_3 - U_4)^2
    \,,
    &
    U_2 & = \epsilond_2(\XCn_1^P + \XCn_2^P - \XCn_3^P - \XCn_4^P)^2
    \,,
    \\
    \label{eq:xDBL-3}
    \XCn_3^{[2]P} & = \epsilon_3(U_1 - U_2 + U_3 - U_4)^2
    \,,
    &
    U_3 & = \epsilond_3(\XCn_1^P - \XCn_2^P + \XCn_3^P - \XCn_4^P)^2
    \,,
    \\
    \label{eq:xDBL-4}
    \XCn_4^{[2]P} & = \epsilon_4(U_1 - U_2 - U_3 + U_4)^2
    \,,
    &
    U_4 & = \epsilond_4(\XCn_1^P - \XCn_2^P - \XCn_3^P + \XCn_4^P)^2
\end{align}
for \(\pm P\) with all \(\XCn_i^P \not= 0\);
more complicated formul\ae{} exist for other \(\pm P\) 
(cf.~\S\ref{sec:forms}).

\subsection{Deconstructing pseudo-doubling}
\label{sec:hexagon}

Figure~\ref{fig:hexagon}
deconstructs the pseudo-doubling on \(\KChudnovsky\)
from~\S\ref{subsec:fastkummer}
into a cycle of atomic maps between different Kummer surfaces,
which form a sort of hexagon.
\begin{figure}
    \[
        \xymatrix{
            & 
            \KGaudry \ar[r]^{\Squaren}_{(2,2)} 
            & 
            \KChudnovsky \ar[dr]^{\Hadamardn}_{\cong}
            &
            \\
            \KInterdual \ar[ur]^{\Scalen}_{\cong}
            & 
            & 
            & 
            \KInter \ar[dl]^{\Scaled}_{\cong}
            \\
            & 
            \KChudnovskydual \ar[ul]^{\Hadamardd}_{\cong}
            & 
            \KGaudrydual \ar[l]^{\Squared}_{(2,2)}
            &
        }
    \]
    \caption{
        Decomposition of pseudo-doubling on fast Kummer surfaces
        into a cycle of morphisms.
        Here,
        \(\KChudnovsky\) is the ``squared'' surface we mostly compute with;
        \(\KGaudry\) is the related ``canonical'' surface;
        and \(\KInter\) is a new ``intermediate'' surface
        which we use in signature verification.
        (The surfaces \(\KChudnovskydual\), 
        \(\KGaudrydual\),
        and \(\KInterdual\)
        are their duals.)
    }
    \label{fig:hexagon}
\end{figure}
Starting at any one of the Kummers 
and doing a complete cycle of these maps 
carries out pseudo-doubling on that Kummer.
Doing a half-cycle from a given Kummer around to its dual
computes a \((2,2)\)-isogeny splitting pseudo-doubling.

Six different Kummer surfaces may seem like a lot to keep track of---even if
there are really only three, together with their duals.
However, the new surfaces are important,
because they are crucial in deriving our \texttt{Check} routine
(of course, once the algorithm has been written down,
the reader is free to forget about the existence of these other surfaces).

The cycle actually begins 
one step before \(\KChudnovsky\),
with the \textbf{canonical surface}
\[
    \KGaudry
    : 
        2\En\cdot\XGn_1\XGn_2\XGn_3\XGn_4 
    = 
    \begin{array}{c}
        \XGn_1^4 + \XGn_2^4 + \XGn_3^4 + \XGn_4^4
        - \Fn(\XGn_1^2\XGn_4^2 + \XGn_2^2\XGn_3^2) 
        \\
        {}
        - \Gn(\XGn_1^2\XGn_3^2 + \XGn_2^2\XGn_4^2) 
        - \Hn(\XGn_1^2\XGn_2^2 + \XGn_3^2\XGn_4^2) 
        \,.
    \end{array}
\]
This was the model proposed for cryptographic applications by Gaudry
in~\cite{Gaudry};
we call it ``canonical'' because 
it is the model arising from a canonical basis of theta functions of
level \((2,2)\). 

Now we can begin our tour around the hexagon,
moving from \(\KGaudry\) to \(\KChudnovsky\)
via the \textbf{squaring} map
\[
    \Squaren : 
    \big(\XGn_1:\XGn_2:\XGn_3:\XGn_4\big)
    \longmapsto
    \big(\XCn_1:\XCn_2:\XCn_3:\XCn_4\big)
    =
    \big(\XGn_1^2:\XGn_2^2:\XGn_3^2:\XGn_4^3\big)
    \,,
\]
which corresponds to a \((2,2)\)-isogeny of Jacobians.
Moving on from \(\KChudnovsky\),
the \textbf{Hadamard transform} isomorphism
\[
    \Hadamardn 
    : 
    \left(\XCn_1:\XCn_2:\XCn_3:\XCn_4\right)
    \longmapsto
    \left(\XHn_1:\XHn_2:\XHn_3:\XHn_4\right)
    =
    \left(
        \begin{array}{r}
            \XCn_1+\XCn_2+\XCn_3+\XCn_4
            \\
            : \XCn_1+\XCn_2-\XCn_3-\XCn_4
            \\
            : \XCn_1-\XCn_2+\XCn_3-\XCn_4
            \\
            : \XCn_1-\XCn_2-\XCn_3+\XCn_4
        \end{array}
    \right) 
\] 
takes us into a third kind of Kummer,
which we call the \textbf{intermediate surface}:
\[
    \KInter
    : 
    \frac{2\Ed}{\an\bn\cn\dn}\cdot\XHn_1\XHn_2\XHn_3\XHn_4
    =
    \begin{array}{c}
        \frac{\XHn_1^4}{\mu_1^2} + \frac{\XHn_2^4}{\mu_2^2}
        + \frac{\XHn_3^4}{\mu_3^2} + \frac{\XHn_4^4}{\mu_4^2}
        - \Fd\left(
            \frac{\XHn_1^2}{\mu_1}
            \frac{\XHn_4^2}{\mu_4}
            +
            \frac{\XHn_2^2}{\mu_2}
            \frac{\XHn_3^2}{\mu_3}
        \right)
        \\
        {} 
        - \Gd\left(
            \frac{\XHn_1^2}{\mu_1}
            \frac{\XHn_3^2}{\mu_3}
            +
            \frac{\XHn_2^2}{\mu_2}
            \frac{\XHn_4^2}{\mu_4}
        \right)
        - 
        \Hd\left(
            \frac{\XHn_1^2}{\mu_1}
            \frac{\XHn_2^2}{\mu_2}
            +
            \frac{\XHn_3^2}{\mu_3}
            \frac{\XHn_4^2}{\mu_4}
        \right)
        \,.
    \end{array}
\]
We will use \(\KInter\) for signature verification.
Now the \textbf{dual scaling} isomorphism
\[
    \Scaled : 
    \big(\XHn_1:\XHn_2:\XHn_3:\XHn_4\big)
    \longmapsto 
    \big(\XGd_1:\XGd_2:\XGd_3:\XGd_4\big)
    =
    \big(\XHn_1/\ad:\XHn_2/\bd:\XHn_3/\cd:\XHn_4/\dd\big)
\]
takes us into the 
\textbf{dual canonical surface}
\[
    \KGaudrydual
    : 
    2\Ed\cdot\XGd_1\XGd_2\XGd_3\XGd_4 
    =
    \begin{array}{c}
        \XGd_1^4 + \XGd_2^4 + \XGd_3^4 + \XGd_4^4
        - \Fd(\XGd_1^2\XGd_4^2 + \XGd_2^2\XGd_3^2) 
        \\
        {}
        - \Gd(\XGd_1^2\XGd_3^2 + \XGd_2^2\XGd_4^2) 
        - \Hd(\XGd_1^2\XGd_2^2 + \XGd_3^2\XGd_4^2) 
        \,.
    \end{array}
\]
We are now halfway around the hexagon;
the return journey is simply the dual of the outbound trip.
The \textbf{dual squaring}
map
\[
    \Squared : 
    \big(\XGd_1:\XGd_2:\XGd_3:\XGd_4\big)
    \longmapsto
    \big(\XCd_1:\XCd_2:\XCd_3:\XCd_4\big)
    =
    \big(\XGd_1^2:\XGd_2^2:\XGd_3^2:\XGd_4^3\big)
    \,,
\]
another \((2,2)\)-isogeny,
carries us into the \textbf{dual squared surface}
\[
    \KChudnovskydual
    : 
    4\Ed^2\cdot\XCd_1\XCd_2\XCd_3\XCd_4 
    = 
    \left( 
        \begin{array}{c}
            \XCd_1^2 + \XCd_2^2 + \XCd_3^2 + \XCd_4^2
            - \Fd(\XCd_1\XCd_4+\XCd_2\XCd_3)
            \\
            {}
            - \Gd(\XCd_1\XCd_3 + \XCd_2\XCd_4) 
            - \Hd(\XCd_1\XCd_2 + \XCd_3\XCd_4) 
        \end{array}
    \right)^2
    \,,
\]
before the \textbf{dual Hadamard transform}
\[
    \Hadamardd 
    : 
    \big(\XCd_1:\XCd_2:\XCd_3:\XCd_4\big)
    \longmapsto
    \big(\XHd_1:\XHd_2:\XHd_3:\XHd_4\big)
    =
    \left(
        \begin{array}{r}
            \XCd_1+\XCd_2+\XCd_3+\XCd_4
            \\
            : \XCd_1+\XCd_2-\XCd_3-\XCd_4
            \\
            : \XCd_1-\XCd_2+\XCd_3-\XCd_4
            \\
            : \XCd_1-\XCd_2-\XCd_3+\XCd_4
        \end{array}
    \right) 
\]
takes us into the \textbf{dual intermediate surface}
\[
    \KInterdual
    : 
    \frac{2\En}{\an\bn\cn\dn}\cdot\XHd_1\XHd_2\XHd_3\XHd_4
    =
    \begin{array}{c}
        \frac{\XHd_1^4}{\mu_1^2} + \frac{\XHd_2^4}{\mu_2^2}
        + \frac{\XHd_3^4}{\mu_3^2} + \frac{\XHd_4^4}{\mu_4^2}
        - \Fd\left(
            \frac{\XHd_1^2}{\mu_1}
            \frac{\XHd_4^2}{\mu_4}
            - 
            \frac{\XHd_2^2}{\mu_2}
            \frac{\XHd_3^2}{\mu_3}
        \right)
        \\
        {} 
        - \Gd\left(
            \frac{\XHd_1^2}{\mu_1}
            \frac{\XHd_3^2}{\mu_3}
            - 
            \frac{\XHd_2^2}{\mu_2}
            \frac{\XHd_4^2}{\mu_4}
        \right)
        - 
        \Hd\left(
            \frac{\XHd_1^2}{\mu_1}
            \frac{\XHd_2^2}{\mu_2}
            - 
            \frac{\XHd_3^2}{\mu_3}
            \frac{\XHd_4^2}{\mu_4}
        \right)
        \,.
    \end{array}
\]
A final \textbf{scaling} isomorphism
\[
    \Scalen : 
    \big(\XHd_1:\XHd_2:\XHd_3:\XHd_4\big)
    \longmapsto 
    \big(\XGn_1:\XGn_2:\XGn_3:\XGn_4\big)
    =
    \big(
        \XHd_1/\an:\XHd_2/\bn:\XHd_3/\cn:\XHd_4/\dn
    \big)
\]
takes us from \(\KInterdual\) back to \(\KGaudry\), where we started.

The canonical surfaces
\(\KGaudry\) resp. \(\KGaudrydual\) 
are only defined over \(\FF_p(\an\bn\cn\dn)\) resp. \(\FF_p(\ad\bd\cd\dd)\),
while the scaling isomorphisms \(\Scaled\) resp. \(\Scalen\)
are defined over \(\FF_p(\ad,\bd,\cd,\dd)\) 
resp. \(\FF_p(\an,\bn,\cn,\dn)\).
Everything else is defined over \(\FF_p\).

We confirm that one cycle around the hexagon,
starting and ending on \(\KChudnovsky\), 
computes the pseudo-doubling 
of Equations~\eqref{eq:xDBL-1}, 
\eqref{eq:xDBL-2},
\eqref{eq:xDBL-3},
and~\eqref{eq:xDBL-4}.
Similarly,
one cycle around the hexagon starting and ending on \(\KGaudry\)
computes Gaudry's pseudo-doubling from~\cite[\S3.2]{Gaudry}.

\section{Signature verification on Kummer surfaces}
\label{sec:verif}

To verify signatures in the Kummer surface implementation,
we need to supply a \texttt{Check} algorithm
which,
given \(\pm P\), \(\pm Q\), and \(\pm R\) on \(\KChudnovsky\),
decides whether
\( \pm R \in \{ \pm (P+Q), \pm(P-Q)\} \).
For the elliptic version of \xsign described in~\S\ref{sec:elliptic},
we saw that this came down to 
checking that \(\pm R\) satisfied one quadratic relation
whose three coefficients were biquadratic forms
in \(\pm P\) and \(\pm Q\).
The same principle extends to Kummer surfaces,
where the pseudo-group law 
is similarly defined by biquadratic forms;
but since Kummer surfaces are defined in terms of four coordinates
(as opposed to the two coordinates of the \(x\)-line),
this time there are
six simple quadratic relations to verify,
with a total of ten coefficient forms.

\subsection{Biquadratic forms and pseudo-addition}
\label{sec:forms}

Let \(\Kum\) be a Kummer surface.
If \(\pm P\) is a point on \(\Kum\),
then we write \((Z_1^P:Z_2^P:Z_3^P:Z_4^P)\)
for its projective coordinates.
The classical theory of abelian varieties
tells us that there exist
biquadratic forms \(B_{ij}\) 
for \(1 \le i, j \le 4\)
such that for all
\(\pm P\)
and
\(\pm Q\),
if
\(\pm S = \pm (P + Q)\)
and
\(\pm D = \pm (P - Q)\)
then
\begin{equation}
    \label{eq:kummer-biquad-def}
    \left(
        Z_i^SZ_j^D + Z_j^SZ_i^D 
    \right)_{i,j=1}^4
    =
    \lambda
    \left(
        B_{ij}(Z_1^P,Z_2^P,Z_3^P,Z_4^P,Z_1^Q,Z_2^Q,Z_3^Q,Z_4^Q)
    \right)_{i,j=1}^4
\end{equation}
where \(\lambda\in\field^\times\) is some common projective factor 
depending only on the affine representatives chosen for 
\(\pm P\), \(\pm Q\), \(\pm(P+Q)\) and \(\pm(P-Q)\).
These biquadratic forms are the foundation of
pseudo-addition and doubling laws on \(\Kum\):
if the ``difference'' \(\pm D\) is known, 
then we can use the \(B_{ij}\) to compute \(\pm S\).

\begin{proposition}
    \label{prop:kummer-verif}
    Let \(\{B_{ij}:1\le i,j\le 4\}\)
    be a set of biquadratic forms
    on \(\Kum\times\Kum\)
    satisfying Equation~\eqref{eq:kummer-biquad-def}
    for all \(\pm P\), \(\pm Q\), \(\pm(P+Q)\), and \(\pm(P-Q)\).
    Then
    \[
        \pm R = (Z_1^R:Z_2^R:Z_3^R:Z_4^R) 
        \in
        \left\{
            {\pm(P+Q)}, {\pm(P-Q)}
        \right\}
    \]
    if and only if 
    (writing \(B_{ij}\) for \(B_{ij}(Z_1^P,\ldots,Z_4^Q)\))
    we have
    \begin{equation}
        \label{eq:kummer-verif-lhs}
        B_{jj}\cdot(Z_i^R)^2 
        - 
        2B_{ij}\cdot Z_i^RZ_j^R 
        + 
        B_{ii}\cdot(Z_j^R)^2 
        = 
        0
        \quad
        \text{for all } 1 \le i < j \le 4
        \,.
    \end{equation}
\end{proposition}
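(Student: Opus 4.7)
The plan is to use the defining property of the biquadratic forms in Equation~\eqref{eq:kummer-biquad-def} to pin down $\pm(P+Q)$ and $\pm(P-Q)$ as the only solutions to the system in Equation~\eqref{eq:kummer-verif-lhs}. Writing $\pm S := \pm(P+Q)$ and $\pm D := \pm(P-Q)$, I would first fix affine representatives so that Equation~\eqref{eq:kummer-biquad-def} holds with a definite scalar $\lambda$; taking $i = j$ then gives $\lambda B_{ii} = 2 Z_i^S Z_i^D$, while the off-diagonal entries give $\lambda B_{ij} = Z_i^S Z_j^D + Z_j^S Z_i^D$.

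For the forward implication, I would plug $(Z_i^R, Z_j^R) = (Z_i^S, Z_j^S)$ into the left-hand side of Equation~\eqref{eq:kummer-verif-lhs} and expand using these identities; the four resulting monomials cancel in pairs, and the expression vanishes identically. By the $S \leftrightarrow D$ symmetry the same holds for $\pm R = \pm D$, so both lie in the zero locus of the system.

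For the converse, I would read each relation in Equation~\eqref{eq:kummer-verif-lhs} as saying that $(Z_i^R : Z_j^R) \in \PP^1$ is a root of the binary quadratic form $B_{jj} X^2 - 2 B_{ij} XY + B_{ii} Y^2$. A straightforward application of Vieta's formulas, using the identities for $B_{ii}$ and $B_{ij}$, shows that the two projective roots of this quadratic are precisely $(Z_i^S : Z_j^S)$ and $(Z_i^D : Z_j^D)$. Hence for each pair $i < j$, the ratio $(Z_i^R : Z_j^R)$ must agree with that of either $\pm S$ or $\pm D$.

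The main obstacle is ruling out mixed pairwise choices: a priori the two-valued decision could depend on the pair. Assuming $\pm S \neq \pm D$, I would fix a pivot index $k$ with $Z_k^R \neq 0$ and argue that either $(Z_k^R : Z_i^R) = (Z_k^S : Z_i^S)$ for all $i$, or $(Z_k^R : Z_i^R) = (Z_k^D : Z_i^D)$ for all $i$, since any inconsistent choice between two indices $i$ and $j$, propagated through the quadratic on the $(i,j)$-pair, would force the coincidence $(Z_i^S : Z_j^S) = (Z_i^D : Z_j^D)$; iterating yields $\pm S = \pm D$, contrary to assumption. In the residual degenerate case $\pm S = \pm D$ the six quadrics cut out that single point with multiplicity, so the conclusion is immediate. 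A uniform choice across all pairs then gives $\pm R \in \{\pm S, \pm D\}$, completing the argument.
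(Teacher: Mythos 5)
Your forward direction is correct, and your Vieta step is best seen via the factorization
\[
    \lambda\bigl(B_{jj}(Z_i^R)^2 - 2B_{ij}Z_i^RZ_j^R + B_{ii}(Z_j^R)^2\bigr)
    =
    2\bigl(Z_j^SZ_i^R - Z_i^SZ_j^R\bigr)\bigl(Z_j^DZ_i^R - Z_i^DZ_j^R\bigr)\,,
\]
which follows from \(\lambda B_{ii} = 2Z_i^SZ_i^D\) and \(\lambda B_{ij} = Z_i^SZ_j^D + Z_j^SZ_i^D\). Each relation in Equation~\eqref{eq:kummer-verif-lhs} therefore says exactly that, for that pair \((i,j)\), the \((i,j)\)-minor of \(R\) against \(S\) or the \((i,j)\)-minor of \(R\) against \(D\) vanishes; substituting \(\pm R = \pm S\) or \(\pm R = \pm D\) kills one factor, which proves the forward implication.

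The gap is in the converse, specifically in the step ruling out mixed pairwise choices. The \((i,j)\)-relation carries no information when both factors vanish for trivial reasons --- in particular when \(Z_i^R = Z_j^R = 0\), or when the binary form degenerates because \(S\) and \(D\) have zero coordinates in the right places --- so the ``propagation'' cannot always be carried out, and the claim that an inconsistent choice forces \((Z_i^S:Z_j^S) = (Z_i^D:Z_j^D)\) is false at the level of generality at which you argue. Concretely, take \(S = (1:0:1:0)\), \(D = (1:1:0:0)\), \(R = (1:0:0:0)\): for every pair \(1\le i<j\le 4\) at least one of the two factors above vanishes, so all six relations hold, yet \(R\) is proportional to neither \(S\) nor \(D\); the inconsistency between the pair \((1,2)\) (which forces the \(S\)-branch) and the pair \((1,3)\) (which forces the \(D\)-branch) cannot be propagated through the pair \((2,3)\), where the relation is vacuous. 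Hence the pairwise conditions alone do not imply the conclusion: one must use the hypothesis that \(\pm R\) lies on \(\Kum\) and that \(\{\pm S,\pm D\}\) arises from the addition correspondence. This is exactly where the paper's (admittedly terse) proof puts the weight: it argues that the six quadrics cut out a zero-dimensional degree-\(2\) subscheme \emph{of \(\Kum\)}, namely \(\{\pm(P+Q),\pm(P-Q)\}\), so no spurious solutions on the surface can survive. To repair your more elementary route you would need a comparable geometric input, e.g.\ showing that points of the relevant Kummer models never exhibit the degenerate coordinate patterns that make the relations vacuous.
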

\begin{proof}
    Looking at Equation~\eqref{eq:kummer-biquad-def},
    we see that
    the system of six quadratics from Equation~\eqref{eq:kummer-verif-lhs}
    cuts out a zero-dimensional degree-2 subscheme of \(\Kum\):
    that is, the pair of points \(\{\pm(P+Q),\pm(P-Q)\}\) (which may coincide).
    Hence, if \((Z_1^R:Z_2^R:Z_3^R:Z_4^R) = \pm R\)
    satisfies all of the equations,
    then it must be one of them.
    \qed
\end{proof}

\subsection{Deriving efficiently computable forms}

Proposition~\ref{prop:kummer-verif}
is the exact analogue of Proposition~\ref{prop:elliptic-verif}
for Kummer surfaces.
All that we need 
to turn it into a \texttt{Check} algorithm for \xsign
is an explicit and efficiently computable
representation of the \(B_{ij}\).
These forms depend
on the projective model of the Kummer surface;
so we write \(\BCan_{ij}\), \(\BSqr_{ij}\), and \(\BInt_{ij}\)
for the forms on the canonical, squared, and intermediate surfaces.

On the canonical surface,
the forms \(\BCan_{ij}\) are classical 
(see e.g.~\cite[\S2.2]{Baily62}).
The on-diagonal forms \(\BCan_{ii}\) are 
\begin{align}
    \label{eq:BCan-ii-1}
    \BCan_{11} & = \frac{1}{4}\Big(\frac{V_1}{\squaredd_1} + \frac{V_2}{\squaredd_2} + \frac{V_3}{\squaredd_3} + \frac{V_4}{\squaredd_4}\Big)
    \,,
    &
    \BCan_{22} & = \frac{1}{4}\Big(\frac{V_1}{\squaredd_1} + \frac{V_2}{\squaredd_2} - \frac{V_3}{\squaredd_3} - \frac{V_4}{\squaredd_4}\Big)
    \,,
    \\
    \label{eq:BCan-ii-2}
    \BCan_{33} & = \frac{1}{4}\Big(\frac{V_1}{\squaredd_1} - \frac{V_2}{\squaredd_2} + \frac{V_3}{\squaredd_3} - \frac{V_4}{\squaredd_4}\Big)
    \,,
    &
    \BCan_{44} & = \frac{1}{4}\Big(\frac{V_1}{\squaredd_1} - \frac{V_2}{\squaredd_2} - \frac{V_3}{\squaredd_3} + \frac{V_4}{\squaredd_4}\Big)
    \,,
\end{align}
where
\begin{align*}
    V_1 
    & = 
    \big((\XGn_1^P)^2 + (\XGn_2^P)^2 + (\XGn_3^P)^2 + (\XGn_4^P)^2\big)
    \big((\XGn_1^Q)^2 + (\XGn_2^Q)^2 + (\XGn_3^Q)^2 + (\XGn_4^Q)^2\big)
    \,,
    \\
    V_2 
    & = 
    \big((\XGn_1^P)^2 + (\XGn_2^P)^2 - (\XGn_3^P)^2 - (\XGn_4^P)^2\big)
    \big((\XGn_1^Q)^2 + (\XGn_2^Q)^2 - (\XGn_3^Q)^2 - (\XGn_4^Q)^2\big)
    \,,
    \\
    V_3 
    & = 
    \big((\XGn_1^P)^2 - (\XGn_2^P)^2 + (\XGn_3^P)^2 - (\XGn_4^P)^2\big)
    \big((\XGn_1^Q)^2 - (\XGn_2^Q)^2 + (\XGn_3^Q)^2 - (\XGn_4^Q)^2\big)
    \,,
    \\
    V_4 
    & = 
    \big((\XGn_1^P)^2 - (\XGn_2^P)^2 - (\XGn_3^P)^2 + (\XGn_4^P)^2\big)
    \big((\XGn_1^Q)^2 - (\XGn_2^Q)^2 - (\XGn_3^Q)^2 + (\XGn_4^Q)^2\big)
    \,,
\end{align*}
while the off-diagonal forms \(B_{ij}\)
with \(i \not= j\) are
\begin{align}
    \label{eq:BCan-ij}
    \BCan_{ij}
    & =
    \frac{
        2
    }{
        \squaredd_i\squaredd_j-\squaredd_k\squaredd_l
    }
    \left(
        \begin{array}{l}
            \unsquaredn_i\unsquaredn_j
            \big(
                \XGn_i^P\XGn_j^P\XGn_i^Q\XGn_j^Q 
                + 
                \XGn_k^P\XGn_l^P\XGn_k^Q\XGn_l^Q
            \big) 
            \\
            {}
            - 
            \unsquaredn_k\unsquaredn_l
            \big(
                \XGn_i^P\XGn_j^P\XGn_k^Q\XGn_l^Q 
                + 
                \XGn_k^P\XGn_l^P\XGn_i^Q\XGn_j^Q
            \big)
        \end{array}
    \right)
\end{align}
where \(\{i,j,k,l\} = \{1,2,3,4\}\).

All of these forms can be efficiently evaluated.
The off-diagonal \(\BCan_{ij}\) have a particularly compact shape,
while the symmetry of the on-diagonal \(\BCan_{ii}\) makes them
particularly easy to compute simultaneously:
indeed, that is exactly what we do in Gaudry's fast pseudo-addition
algorithm for \(\KGaudry\)~\cite[\S3.2]{Gaudry}.

Ideally, we would like to evaluate the \(\BSqr_{ij}\) on \(\KChudnovsky\),
since that is where our inputs \(\pm P\), \(\pm Q\), and \(\pm R\) 
live.
We can compute the \(\BSqr_{ij}\) 
by dualizing the \(\BCan_{ij}\),
then pulling the \(\BCandual_{ij}\) on \(\KGaudrydual\)
back to \(\KChudnovsky\) 
via \(\Scaled\circ\Hadamardn\).
But while the resulting on-diagonal \(\BSqr_{ii}\) 
maintain the symmetry and efficiency of the \(\BCan_{ii}\),\footnote{%
    As they should, since they are the basis of the
    efficient pseudo-addition on \(\KChudnovsky\)!
}
the off-diagonal \(\BSqr_{ij}\)
turn out to be much less pleasant,
with less apparent exploitable symmetry.
For our applications,
this means that evaluating \(\BSqr_{ij}\) for \(i\not=j\)
implies taking a significant hit in terms of stack
and code size,
not to mention time.

%

We could avoid this difficulty
by mapping the inputs of \texttt{Check}
from \(\KChudnovsky\) into \(\KGaudrydual\),
and then evaluating the \(\BCandual_{ij}\).
But this would involve using---and, therefore, storing---%
the four large unsquared \(\unsquaredd_i\),
which is an important drawback.

Why do the nice \(\BCandual_{ij}\)
become so ugly when pulled back to \(\KChudnovsky\)?
The map \(\Scaled:\KInter\to\KGaudrydual\) 
has no impact on the shape or number of monomials,
so most of the ugliness is due to the Hadamard transform 
\(\Hadamardn:\KChudnovsky\to\KInter\).
In particular, 
if we only pull back the \(\BCandual_{ij}\) as far as \(\KInter\),
then the resulting \(\BInt_{ij}\) retain the nice form of the
\(\BCan_{ij}\) but do not involve the \(\unsquaredd_i\). 
This fact prompts our solution:
we map \(\pm P\), \(\pm Q\), and \(\pm R\)
through \(\Hadamardn\) onto \(\KInter\),
and verify using the forms \(\BInt_{ij}\).

\begin{theorem}
    \label{th:KInt-forms}
    Up to a common projective factor,
    the on-diagonal biquadratic forms on the intermediate surface \(\KInter\)
    are
    \begin{align}
        \label{eq:BInt_11}
        \BInt_{11}
        & =
        \squaredd_1\left(
            \kappa_1 F_1 + \kappa_2 F_2 + \kappa_3 F_3 + \kappa_4 F_4
        \right)
        \,,
        \\
        \label{eq:BInt_22}
        \BInt_{22}
        & =
        \squaredd_2\left(
            \kappa_2 F_1 + \kappa_1 F_2 + \kappa_4 F_3 + \kappa_3 F_4
        \right)
        \,,
        \\
        \label{eq:BInt_33}
        \BInt_{33}
        & =
        \squaredd_3\left(
            \kappa_3 F_1 + \kappa_4 F_2 + \kappa_1 F_3 + \kappa_2 F_4
        \right)
        \,,
        \\
        \label{eq:BInt_44}
        \BInt_{44}
        & =
        \squaredd_4\left(
            \kappa_4 F_1 + \kappa_3 F_2 + \kappa_2 F_3 + \kappa_1 F_4
        \right)
        \,,
    \end{align}
    where
    \begin{align*}
        F_1 & = P_1Q_1 + P_2Q_2 + P_3Q_3 + P_4Q_4
        \,,
        &
        F_2 & = P_1Q_2 + P_2Q_1 + P_3Q_4 + P_4Q_3
        \,,
        \\
        F_3 & = P_1Q_3 + P_3Q_1 + P_2Q_4 + P_4Q_2
        \,,
        & 
        F_4 & = P_1Q_4 + P_4Q_1 + P_2Q_3 + P_3Q_2
        \,,
    \end{align*}
    where \( P_i = \dualof{\epsilon}_i (\XHn_i^P)^2 \)
    and \( Q_i  = \dualof{\epsilon}_i (\XHn_i^Q)^2 \)
    for \( 1\le i \le 4\).
    Up to the same common projective factor,
    the off-diagonal forms are 
    \begin{align}
        \label{eq:BInt_ij}
        \BInt_{ij}
        =
        C
        \cdot
        C_{ij}
        \cdot
        \left(
                \squaredd_k\squaredd_l
                \big(\XHn_{ij}^P - \XHn_{kl}^P\big)
                \big(\XHn_{ij}^Q - \XHn_{kl}^Q\big)
                +
                \big(
                    \squaredd_i\squaredd_j 
                    -
                    \squaredd_k\squaredd_l
                \big)
                \XHn_{kl}^P\XHn_{kl}^Q
        \right)
    \end{align}
    for \(\{i,j,k,l\} = \{1,2,3,4\}\)
    where
    \(
        C_{ij} := 
        \squaredd_i\squaredd_j
        (\squaredd_i\squaredd_k - \squaredd_j\squaredd_l)
        (\squaredd_i\squaredd_l - \squaredd_j\squaredd_k)
    \),
    \( \XHn_{ij}^P := \XHn_i^P\XHn_j^P \),
    \( \XHn_{ij}^Q := \XHn_i^Q\XHn_j^Q \),
    and
    \begin{align*}
        C & :=
        \frac{
            8(\squaredn_1\squaredn_2\squaredn_3\squaredn_4)
            (\squaredd_1\squaredd_2\squaredd_3\squaredd_4)
        }{
            (\squaredd_1\squaredd_2-\squaredd_3\squaredd_4)
            (\squaredd_1\squaredd_3-\squaredd_2\squaredd_4)
            (\squaredd_1\squaredd_4-\squaredd_2\squaredd_3)
        }
        \,.
    \end{align*}
\end{theorem}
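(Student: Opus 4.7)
The plan is to derive the \(\BInt_{ij}\) as pullbacks of the known biquadratic forms on \(\KGaudrydual\) along the scaling isomorphism \(\Scaled : \KInter \to \KGaudrydual\). I would begin with the classical on- and off-diagonal forms \(\BCan_{ij}\) given in~\eqref{eq:BCan-ii-1}--\eqref{eq:BCan-ij}, dualize by swapping \(\unsquaredn_i \leftrightarrow \unsquaredd_i\) (and \(\XGn_i \leftrightarrow \XGd_i\)) to obtain \(\BCandual_{ij}\) on \(\KGaudrydual\), and then substitute \(\XGd_i = \XHn_i/\unsquaredd_i\) everywhere. Because \(\Scaled\) is an isomorphism of Kummer surfaces, the resulting expressions are, up to a common projective factor, biquadratic forms on \(\KInter\) satisfying Equation~\eqref{eq:kummer-biquad-def}; it then suffices to clear denominators uniformly and match the resulting polynomials with those in the statement.

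For the on-diagonal forms, the substitution turns each dualized \(V_i^*\) into a signed sum of products of quantities proportional to \(\epsilond_j(\XHn_j^P)^2 = P_j\) and \(\epsilond_j(\XHn_j^Q)^2 = Q_j\). Expanding the products and regrouping monomials \(P_jQ_k\) by the pairing pattern of their indices yields exactly the four symmetric combinations \(F_1,\ldots,F_4\). Tracking how the four sign patterns \((\pm,\pm,\pm,\pm)\) in the dualized \(\BCandual_{ii}\) weight each \(F_k\) produces the \(\kappa_i\)-coefficients, and clearing a uniform factor from the \(\frac{1}{4}(\cdots/\squaredn_j)\) structure of~\eqref{eq:BCan-ii-1}--\eqref{eq:BCan-ii-2} leaves the outer \(\squaredd_i\) in~\eqref{eq:BInt_11}--\eqref{eq:BInt_44}.

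For the off-diagonal forms, substituting into the dualized~\eqref{eq:BCan-ij} produces an expression of the shape \(\alpha\, \XHn_{ij}^P\XHn_{ij}^Q + \beta\, \XHn_{kl}^P\XHn_{kl}^Q - \gamma\,(\XHn_{ij}^P\XHn_{kl}^Q + \XHn_{kl}^P\XHn_{ij}^Q)\) with coefficients built from the \(\unsquaredd_i\) and from \(\squaredn_i\squaredn_j - \squaredn_k\squaredn_l\). The identity \(\squaredn_i\squaredn_j - \squaredn_k\squaredn_l = \squaredd_i\squaredd_j - \squaredd_k\squaredd_l\) from~\S\ref{sec:constants} lets me eliminate every unsquared constant, and a short algebraic rearrangement recasts the result in the stated ``difference'' form \((\XHn_{ij}^P - \XHn_{kl}^P)(\XHn_{ij}^Q - \XHn_{kl}^Q) + (\squaredd_i\squaredd_j-\squaredd_k\squaredd_l)\XHn_{kl}^P\XHn_{kl}^Q\) weighted by \(\squaredd_k\squaredd_l\), with the index-dependent prefactor \(C_{ij}\) absorbing the remaining scalars.

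The main obstacle is verifying that the on- and off-diagonal derivations can be renormalized so as to share a \emph{single} common projective factor \(C\): this is essential because Equation~\eqref{eq:kummer-biquad-def} involves one global \(\lambda\), and only then does Proposition~\ref{prop:kummer-verif} apply. The scaling discrepancy arises because the on-diagonal forms naturally carry weights in the \(\squaredn_i\), while the off-diagonal ones accumulate denominators of the form \(\squaredd_i\squaredd_j - \squaredd_k\squaredd_l\); reconciling these two normalizations is what forces the explicit constant \(C\) in the statement. Once this common factor is identified, all ten \(\BInt_{ij}\) agree on a single projective scalar and the theorem follows.
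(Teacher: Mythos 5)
Your proposal follows the paper's own proof essentially step for step: pull the classical forms \(\BCandual_{ij}\) on \(\KGaudrydual\) back to \(\KInter\) via \(\Scaled\) (using \(\XGd_i = \XHn_i/\unsquaredd_i\), which introduces the index-dependent prefactor \(\unsquaredd_i\unsquaredd_j\)), dualize Equations~\eqref{eq:BCan-ii-1}--\eqref{eq:BCan-ij}, use \(\squaredn_i\squaredn_j-\squaredn_k\squaredn_l=\squaredd_i\squaredd_j-\squaredd_k\squaredd_l\) to eliminate the unsquared constants, and clear all denominators by one common projective factor (the paper's choice is \(4(\squaredn_1\squaredn_2\squaredn_3\squaredn_4)(\squaredd_1\squaredd_2\squaredd_3\squaredd_4)^2\), leaving \(C\) as the residual ratio in the off-diagonal forms). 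This is correct and is the same argument as in the paper.
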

\begin{proof}
    By definition, 
    \(
        \XGd_i^S\XGd_j^D + \XGd_j^S\XGd_i^D
        =
        \BCandual_{ij}(\XGd_1^P,\ldots,\XGd_4^Q)
    \).
    Pulling back via \(\Scaled\)
    using \(\XGd_i = \XHn_i/\unsquaredd_i\)
    yields
    \begin{align*}
        \BInt_{ij}(\XHn_1^P,\ldots,\XHn_4^Q)
        =
        \XHn_i^S\XHn_j^D + \XHn_j^S\XHn_i^D
        & =
        \unsquaredd_i\unsquaredd_j\big(\XGd_i^S\XGd_j^D + \XGd_j^S\XGd_i^D\big)
        \\
        &
        =
        \unsquaredd_i\unsquaredd_j
        \cdot
        \BCandual_{ij}(\XGd_1^P,\ldots,\XGd_4^Q)
        \\
        &
        =
        \unsquaredd_i\unsquaredd_j
        \cdot
        \BCandual_{ij}(\XHn_1^P/\unsquaredd_1,\ldots,\XHn_4^Q/\unsquaredd_4)
        \,.
    \end{align*}
    Dualizing the \(\BCan_{ij}\)
    from Equations~\eqref{eq:BCan-ii-1}, \eqref{eq:BCan-ii-2},
    and~\eqref{eq:BCan-ij},
    we find
    \begin{align*}
        \BInt_{11} 
        & =
        \squaredd_1 /
        \big(
            4\squaredn_1\squaredn_2\squaredn_3\squaredn_4
            (\squaredd_1\squaredd_2\squaredd_3\squaredd_4)^2
        \big)
        \cdot
        \big( \kappa_1 F_1 + \kappa_2 F_2 + \kappa_3 F_3 + \kappa_4 F_4 \big)
        \,,
        \\
        \BInt_{22} 
        & = 
        \squaredd_2 /
        \big(
            4\squaredn_1\squaredn_2\squaredn_3\squaredn_4
           (\squaredd_1\squaredd_2\squaredd_3\squaredd_4)^2
        \big)
        \cdot
        \big( \kappa_2 F_1 + \kappa_1 F_2 + \kappa_4 F_3 + \kappa_3 F_4 \big)
        \,,
        \\
        \BInt_{33} 
        & =
        \squaredd_3 /
        \big(
            4\squaredn_1\squaredn_2\squaredn_3\squaredn_4
            (\squaredd_1\squaredd_2\squaredd_3\squaredd_4)^2
        \big)
        \cdot
        \big( \kappa_3 F_1 + \kappa_4 F_2 + \kappa_1 F_3 + \kappa_2 F_4 \big)
        \,,
        \\
        \BInt_{44} 
        & = 
        \squaredd_4 /
        \big(
            4\squaredn_1\squaredn_2\squaredn_3\squaredn_4
            (\squaredd_1\squaredd_2\squaredd_3\squaredd_4)^2
        \big)
        \cdot
        \big( \kappa_4 F_1 + \kappa_3 F_2 + \kappa_2 F_3 + \kappa_1 F_4 \big)
        \,,
    \end{align*}
            while the off-diagonal forms \(B_{ij}\) with \(i \not= j\) are
    \begin{align*}
        \BInt_{ij}
        & =
        \frac{
            2
        }{
            \squaredd_k\squaredd_l
            (\squaredd_i\squaredd_j-\squaredd_k\squaredd_l)
        }
        \left(
            \begin{array}{l}
                \squaredd_k\squaredd_l
                \big( Y_{ij}^P - Y_{kl}^P \big)
                \big( Y_{ij}^Q - Y_{kl}^Q \big)
                \\
                {}
                + 
                (\squaredd_i\squaredd_j - \squaredd_k\squaredd_l)
                Y_{kl}^PY_{kl}^Q
            \end{array}
        \right)
    \end{align*}
    for \(\{i,j,k,l\} = \{1,2,3,4\}\).
    Multiplying all of these forms by a common projective factor
    of
    \(
        4(\squaredn_1\squaredn_2\squaredn_3\squaredn_4)
        (\squaredd_1\squaredd_2\squaredd_3\squaredd_4)^2
    \)
    eliminates the denominators in the coefficients,
    and yields the forms of the theorem.
    \qed
\end{proof}

\subsection{Signature verification}

We are now finally ready to implement the \(\texttt{Check}\)
algorithm for \(\KChudnovsky\).
Algorithm~\ref{alg:kummer-check} does this 
by applying \(\Hadamardn\) to its inputs,
then using
the biquadratic forms of Theorem~\ref{th:KInt-forms}.
Its correctness is implied by Proposition~\ref{prop:kummer-verif}.

\begin{algorithm}[ht]
    \setstretch{1.1}
    \caption{Checking the verification relation for points on \(\KChudnovsky\)}
    \label{alg:kummer-check}
    \SetKwProg{Function}{function}{}{}
    \SetKwInOut{Cost}{Cost}
    \SetKwData{V}{V}
    \SetKwData{Y}{Y}
    \SetKwData{P}{P}
    \SetKwData{Q}{Q}
    \SetKwData{F}{F}
    \SetKwData{B}{B}
    \SetKwData{RHS}{RHS}
    \SetKwData{LHS}{LHS}
    \Function{{\tt Check}}{
        \KwIn{%
            \(\pm P\), \(\pm Q\), \(\pm R\) in \(\KChudnovsky(\FF_p)\)
        }
        \KwOut{%
            {\bf True} if \(\pm R \in \{\pm(P+Q), \pm(P-Q)\}\),
            {\bf False} otherwise
        }
        \Cost{%
            \(
            76\MUL + 
            8\SQR + 
            88\MLC + 
            42\ADD + 
            42\SUB
            \)
        }
        \(
            (\Y^P,\Y^Q)
            \gets 
            (\Hadamard(\pm P),\Hadamard(\pm Q))
        \)
        \;
        \(
            (\B_{11},\B_{22},\B_{33},\B_{44})
            \gets
            \texttt{BiiValues}(\Y^P,\Y^Q)
        \)
        \;
        \(
            \Y^R
            \gets 
            \Hadamard(\pm R)
        \)
        \;
        \For{%
            \((i,j)\) 
            \textbf{in}
            \(\{(1,2),(1,3),(1,4),(2,3),(2,4),(3,4)\}\)
        }{
            \(\LHS \gets \B_{ii}\cdot(\Y_j^R)^2  + \B_{jj}\cdot(\Y_i^R)^2 \)
            \;
            \(\B_{ij} \gets \texttt{BijValue}(\Y^P, \Y^Q, (i,j))\)
            \;
            \(\RHS \gets 2\B_{ij}\cdot\Y_i^R\cdot\Y_j^R\)
            \;
            \If{\( \LHS \not= \RHS \)}{%
                \Return{{\bf False}}
            }
        }
        \Return{{\bf True}}
    }
    \Function{{\tt BiiValues}}{
        \KwIn{%
            \(\pm P\), \(\pm Q\) in \(\KInter(\FF_p)\)
        }
        \KwOut{%
            \(
                ( \BInt_{ii}(\pm P, \pm Q) )_{i=1}^4
            \) in \(\FF_p^4\)
        }
        \Cost{%
            \(
            16\MUL + 
            8\SQR + 
            28\MLC + 
            24\ADD
            \)
        }
        \tcp{%
            \ifapp
            See Algorithm~\ref{alg:biivalues} and Theorem~\ref{th:KInt-forms}
            \else
            Use Equations~\eqref{eq:BInt_11}, 
            \eqref{eq:BInt_22},
            \eqref{eq:BInt_33},
            and~\eqref{eq:BInt_44}
            in Theorem~\ref{th:KInt-forms}
            \fi
        }
    }
    \Function{{\tt BijValue}}{
        \KwIn{
            \(\pm P\), \(\pm Q\) in \(\KInter(\FF_p)\)
            and
            \((i,j)\) with \(1\le i,j\le 4\) and \(i \not= j\)
        }
        \KwOut{
            \(\BInt_{ij}(\pm P, \pm Q)\) in \(\FF_p\)
        }
        \Cost{%
            \(
            10\MUL + 
            10\MLC + 
            1\ADD + 
            5\SUB
            \)
        }
        \tcp{%
            \ifapp
            See Algorithm~\ref{alg:bijvalue} and Theorem~\ref{th:KInt-forms}
            \else
            Use Equation~\eqref{eq:BInt_ij} in Theorem~\ref{th:KInt-forms}
            \fi
        }
    }
\end{algorithm}

\subsection{Using cryptographic parameters}

Gaudry and Schost
take \(p = 2^{127}-1\)
and 
\( 
    (\aan:\bbn:\ccn:\ddn)
    =
    (-11:22:19:3)
\) in~\cite{gaudry-schost}.
We also need the constants
\(
    (\aad : \bbd : \ccd : \ddd) = 
    (-33 : 11 : 17 : 49)
\),
\(
    (\kappa_1:\kappa_2:\kappa_3:\kappa_4)
    =
    (-4697:5951:5753:-1991)
\),
and
\(
    (\epsilond_1:\epsilond_2:\epsilond_3:\epsilond_4)
    =
    (-833:2499:1617:561)
\).\footnote{%
    Following the definitions of \S\ref{sec:constants}, 
    the \(\squaredd_i\) are scaled by \(-2\),
    the \(\epsilond_i\) by \(1/11\), and \(C\) by \(2/11^2\).
    These changes influence the \(\BInt_{ij}\), but only up to
    the same projective factor.
}
In practice, where these constants are ``negative'',
we reverse their sign and amend the formul\ae{} above accordingly.
All of these constants are small, 
and fit into one or two bytes each 
(and the \(\epsilond_i\) are already stored for use in \texttt{Ladder}).
We store one large constant 
\[C = {\tt 0x40F50EEFA320A2DD46F7E3D8CDDDA843},\]
and recompute the \(C_{ij}\) on the fly.


\section{
    Kummer point compression
}
\label{sec:comp}

Our public keys are points on \(\KChudnovsky\),
and each signature includes one point on \(\KChudnovsky\).
Minimizing the space required by Kummer points 
is therefore essential.

A projective Kummer point is composed of four field elements;
normalizing by dividing through by a nonzero coordinate reduces us to
three field elements
(this can also be achieved 
using Bernstein's ``wrapping'' technique~\cite{B-ecc06-talk},
as in~\cite{BCLS14} and~\cite{RSSB16}).
But we are talking about Kummer \emph{surfaces}---two-dimensional
objects---so we might hope to compress to two field elements, plus a few
bits to enable us to correctly recover the whole Kummer point.
This is analogous to elliptic curve point compression, 
where we compress projective points \((X:Y:Z)\)
by normalizing to \((x,y) = (X/Z,Y/Z)\),
then storing \((x,\sigma)\), 
where \(\sigma\) is a bit indicating the ``sign'' of \(y\).
Decompressing the datum \((x,\sigma)\) 
to \((X:Y:Z) = (x:y:1)\) 
then requires solving a simple quadratic
to recover the correct \(y\)-coordinate.

For some reason,
no such Kummer point compression method has explicitly appeared in
the literature.
Bernstein remarked in 2006 that 
if we compress a Kummer point to two coordinates, 
then decompression appears to require solving a complicated quartic
equation~\cite{B-ecc06-talk}.
This would be much more expensive than computing the single square root
required for elliptic decompression;
this has perhaps discouraged implementers from 
attempting to compress Kummer points.

But while it may not always be obvious from their defining equations,
the classical theory tells us that
every Kummer is in fact a double cover of~\(\PP^2\),
just as elliptic curves are double covers of \(\PP^1\).
We use this principle below to show that we can always compress 
any Kummer point to two field elements plus two auxiliary bits,
and then decompress by solving a quadratic.
In our applications,
this gives us a convenient packaging of Kummer points in exactly 256 bits.

\subsection{The general principle}

First, we sketch a general method for Kummer point compression
that works for any Kummer 
presented as a singular quartic surface in \(\PP^3\).

Recall that if \(N\) is any point in \(\PP^3\),
then projection away from \(N\) defines a map \(\pi_N: \PP^3\to\PP^2\)
sending points in \(\PP^3\) on the same line through \(N\)
to the same point in \(\PP^2\).
(The map \(\pi_N\) is only a rational map, and not a morphism;
the image of \(N\) itself is not well-defined.)
Now, let \(N\) be a node of a Kummer surface \(\Kum\):
that is, \(N\) is one of the 16 singular points of \(\Kum\).
The restriction of \(\pi_N\) to \(\Kum\)
forms a double cover of \(\PP^2\).
By definition, \(\pi_N\) maps the points on \(\Kum\)
that lie on the same line through \(N\)
to the same point of \(\PP^2\).
Now \(\Kum\) has degree 4,
so each line in~\(\PP^3\) intersects~\(\Kum\) in four points;
but since \(N\) is a double point of~\(\Kum\), 
every line through \(N\) intersects \(\Kum\) at \(N\) \emph{twice},
and then in two other points.
These two remaining points may be ``compressed'' 
to their common image in \(\PP^2\) under \(\pi_N\),
plus a single bit to distinguish the appropriate preimage.

To make this more concrete, 
let \(L_1\), \(L_2\), and \(L_3\)
be linearly independent linear forms on \(\PP^3\) vanishing on~\(N\);
then \(N\) is the intersection of the three planes in \(\PP^3\)
cut out by the \(L_i\).
We can now realise the projection 
\(\pi_N \colon\Kum\to\PP^2\) 
as
\[
    \pi_N
    \colon
    (P_1:\cdots:P_4) 
    \longmapsto 
    \big(
        L_1(P_1,\ldots,P_4)
        :
        L_2(P_1,\ldots,P_4)
        :
        L_3(P_1,\ldots,P_4)
    \big)
    \,.
\]
Replacing \((L_1,L_2,L_3)\)
with another basis of \(\subgrp{L_1,L_2,L_3}\) 
yields another projection,
which corresponds to composing \(\pi_N\) 
with a linear automorphism of \(\PP^2\).

If \(L_1\), \(L_2\), and \(L_3\) are chosen as above to vanish on \(N\),
and \(L_4\) is any linear form
not in \(\subgrp{L_1,L_2,L_3}\),
then the fact that \(\pi_N\) is a double cover of the \((L_1,L_2,L_3)\)-plane
implies that the defining equation of \(\Kum\)
can be rewritten in the form
\[
    \Kum: K_2(L_1,L_2,L_3)L_4^2
        - 2K_3(L_1,L_2,L_3)L_4
        + K_4(L_1,L_2,L_3)
         = 0
\]
where each \(K_i\) is a homogeneous polynomial 
of degree \(i\) in \(L_1\), \(L_2\), and \(L_3\).
This form, quadratic in \(L_4\),
allows us to replace the \(L_4\)-coordinate
with a single bit indicating the ``sign'' in the corresponding 
root of this quadratic;
the remaining three coordinates can be normalized to an affine plane
point.
The net result is a compression to two field elements,
plus one bit indicating the normalization,
plus another bit to indicate the correct value of \(L_4\).

\begin{remark}
    Stahlke gives a compression algorithm in~\cite{Stahlke04}
    for points on genus-2 Jacobians in the usual Mumford representation.
    The first step can be seen as 
    a projection to the most general model of the Kummer 
    (as in~\cite[Chapter 3]{casselsflynn}),
    and then the second is 
    an implicit implementation of the principle above.
\end{remark}

\subsection{From squared Kummers to tetragonal Kummers}

We want to define an efficient point compression scheme for \(\KChudnovsky\).
The general principle above makes this possible,
but it leaves open the choice of node \(N\)
and the choice of forms \(L_i\).
These choices determine the complexity of the resulting~\(K_i\),
and hence the cost of evaluating them;
this in turn has a non-negligible impact on the time and space
required to compress and decompress points,
as well as the number of new auxiliary constants that must be stored.

In this section we define a choice of \(L_i\)
reflecting the special symmetry of~\(\KChudnovsky\).
A similar procedure for \(\KGaudry\)
appears in more classical language\footnote{%
    The analogous model of \(\KGaudry\) in~\cite[\S54]{Hudson}
    is called
    ``the equation referred to a Rosenhain tetrad'',
    whose defining equation
    ``...may be deduced from the fact that Kummer's surface
    is the focal surface of the congruence of rays common to a
    tetrahedral complex and a linear complex.''
    Modern cryptographers will understand why we have chosen to give a
    little more algebraic detail here.
}
in~\cite[\S54]{Hudson}.
The trick is to distinguish not one node of \(\KChudnovsky\),
but rather the four nodes
forming the kernel of the \((2,2)\)-isogeny 
\(\Squared\circ\Scaled\circ\Hadamardn : \KChudnovsky \to \KChudnovskydual\),
namely
\begin{align*}
    \pm 0 
    =
    N_0 & = 
    (\squaredn_1:\squaredn_2:\squaredn_3:\squaredn_4)
    \,,
    &
    N_1 & = (\squaredn_2:\squaredn_1:\squaredn_4:\squaredn_3)
    \,,
    \\
    N_2 & = (\squaredn_3:\squaredn_4:\squaredn_1:\squaredn_2)
    \,,
    &
    N_3 & = (\squaredn_4:\squaredn_3:\squaredn_2:\squaredn_1)
    \,.
\end{align*}
We are going to define a coordinate system
where these four nodes become the vertices of a coordinate tetrahedron;
then, projection onto any three of the four coordinates
will represent a projection away from one of these four nodes.
The result will be an isomorphic Kummer \(\KTetra\)
whose defining equation is quadratic in \emph{all four} of its
variables.
This might seem like overkill for point compression---quadratic in just one
variable would suffice---but it has the agreeable effect of dramatically
reducing the overall complexity of the defining equation,
saving time and memory in our compression and decompression algorithms.

The key is the matrix identity
\begin{align}
    \label{eq:alpha-relation}
    \begin{pmatrix}
        \kappad_4 & \kappad_3 & \kappad_2 & \kappad_1
        \\
        \kappad_3 & \kappad_4 & \kappad_1 & \kappad_2
        \\
        \kappad_2 & \kappad_1 & \kappad_4 & \kappad_3
        \\
        \kappad_1 & \kappad_2 & \kappad_3 & \kappad_4
    \end{pmatrix}
    \begin{pmatrix}
        \aan & \bbn & \ccn & \ddn
        \\
        \bbn & \aan & \ddn & \ccn
        \\
        \ccn & \ddn & \aan & \bbn
        \\
        \ddn & \ccn & \bbn & \aan
    \end{pmatrix}
    =
    8\aad\bbd\ccd\ddd 
    \begin{pmatrix}
        0 & 0 & 0 & 1
        \\
        0 & 0 & 1 & 0
        \\
        0 & 1 & 0 & 0
        \\
        1 & 0 & 0 & 0
    \end{pmatrix}
    \,,
\end{align}
which tells us that the projective isomorphism
\(\Tetran\colon\PP^3\to\PP^3\)
defined by
\begin{align*}
    \Tetran 
    \colon
    \left(\begin{array}{r}
        \XCn_1
        \\
        : \XCn_2
        \\
        : \XCn_3
        \\
        : \XCn_4
    \end{array}\right)
    \mapsto 
    \left(\begin{array}{r}
        \XTn_1
        \\
        : \XTn_2
        \\
        : \XTn_3
        \\
        : \XTn_4
    \end{array}\right)
    =
    \left(\begin{array}{r}
        \kappad_4\XCn_1 + \kappad_3\XCn_2 + \kappad_2\XCn_3 + \kappad_1\XCn_4
        \\
        \!: 
        \kappad_3\XCn_1 + \kappad_4\XCn_2 + \kappad_1\XCn_3 + \kappad_2\XCn_4
        \\
        \!: 
        \kappad_2\XCn_1 + \kappad_1\XCn_2 + \kappad_4\XCn_3 + \kappad_3\XCn_4
        \\
        \!: 
        \kappad_1\XCn_1 + \kappad_2\XCn_2 + \kappad_3\XCn_3 + \kappad_4\XCn_4
    \end{array}\right)
\end{align*}
maps the four ``kernel'' nodes
to the corners of a coordinate tetrahedron:
\begin{align*}
    \Tetran(N_0) & = (0:0:0:1) \,,
    &
    \Tetran(N_2) & = (0:1:0:0) \,,
    \\
    \Tetran(N_1) & = (0:0:1:0) \,,
    &
    \Tetran(N_3) & = (1:0:0:0) \,.
\end{align*}
The image of \(\KChudnovsky\)
under \(\Tetran\)
is the \textbf{tetragonal surface}
\[
    \KTetra
    :
    4t\XTn_1\XTn_2\XTn_3\XTn_4
    =
    \begin{array}{c}
        r_1^2(\XTn_1\XTn_2+\XTn_3\XTn_4)^2
        +
        r_2^2(\XTn_1\XTn_3+\XTn_2\XTn_4)^2
        +
        r_3^2(\XTn_1\XTn_4+\XTn_2\XTn_3)^2
        \\
        {} - 2r_1s_1((\XTn_1^2+\XTn_2^2)\XTn_3\XTn_4 + \XTn_1\XTn_2(\XTn_3^2+\XTn_4^2))
        \\
        {} - 2r_2s_2((\XTn_1^2+\XTn_3^2)\XTn_2\XTn_4 + \XTn_1\XTn_3(\XTn_2^2+\XTn_4^2))
        \\
        {} - 2r_3s_3((\XTn_1^2+\XTn_4^2)\XTn_2\XTn_3 + \XTn_1\XTn_4(\XTn_2^2+\XTn_3^2))
    \end{array}
\]
where \(t = 16 \aan\bbn\ccn\ddn \aad\bbd\ccd\ddd\) and
\begin{align*}
    r_1 & = (\aan\ccn-\bbn\ddn)(\aan\ddn-\bbn\ccn)
    \,,
    &
    s_1 & = (\aan\bbn-\ccn\ddn)(\aan\bbn+\ccn\ddn)
    \,,
    \\
    r_2 & = (\aan\bbn-\ccn\ddn)(\aan\ddn-\bbn\ccn)
    \,,
    &
    s_2 & = (\aan\ccn-\bbn\ddn)(\aan\ccn+\bbn\ddn)
    \,,
    \\
    r_3 & = (\aan\bbn-\ccn\ddn)(\aan\ccn-\bbn\ddn)
    \,,
    &
    s_3 & = (\aan\ddn-\bbn\ccn)(\aan\ddn+\bbn\ccn)
    \,.
\end{align*}
As promised,
the defining equation of \(\KTetra\) 
is quadratic in all four of its variables.

For compression we project away from \(\Tetran(\pm0) = (0:0:0:1)\)
onto the \((L_1:L_2:L_3)\)-plane.
Rewriting the defining equation as a quadratic in \(\XTn_4\) 
gives
\[
    \KTetra
    : 
    K_4(\XTn_1,\XTn_2,\XTn_3)
    -
    2K_3(\XTn_1,\XTn_2,\XTn_3)\XTn_4
    + 
    K_2(\XTn_1,\XTn_2,\XTn_3)\XTn_4^2
    = 0
\]
where 
\begin{align*}
    K_2 
	&
	:= 
		r_3^2 \XTn_1^2
		+ r_2^2 \XTn_2^2
		+ r_1^2 \XTn_3^2
        - 2\left(
            r_3s_3 \XTn_2\XTn_3
		    + r_2s_2 \XTn_1\XTn_3
		    + r_1s_1 \XTn_1\XTn_2
        \right)
    \,,
    \\
    K_3 
	&
	:= 
    r_1s_1(\XTn_1^2+\XTn_2^2)\XTn_3
    +
    r_2s_2(\XTn_1^2+\XTn_3^2)\XTn_2
    +
    r_3s_3(\XTn_2^2+\XTn_3^2)\XTn_1
    \\
    & \qquad {}
    + (2t - (r_1^2+r_2^2+r_3^2))\XTn_1\XTn_2\XTn_3
    \,,
    \\
    K_4 
	& 
	:= 
        r_3^2 \XTn_2^2\XTn_3^2 
		+
        r_2^2 \XTn_1^2\XTn_3^2 
		+
        r_1^2 \XTn_1^2\XTn_2^2 
        - 2\left(
            r_3s_3 \XTn_1
            + 
            r_2s_2 \XTn_2
            +
            r_1s_1 \XTn_3
        \right) \XTn_1\XTn_2\XTn_3
    \,.
\end{align*}

\begin{lemma}
    \label{lemma:nonconstant}
    If \((l_1:l_2:l_3:l_4)\) is a point on \(\KTetra\),
    then
    \[
        K_2(l_1,l_2,l_3)
        =
        K_3(l_1,l_2,l_3)
        =
        K_4(l_1,l_2,l_3)
        =
        0
        \iff
        l_1 = l_2 = l_3 = 0
        \,.
    \]
\end{lemma}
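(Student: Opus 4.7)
The direction $(\Leftarrow)$ is immediate: each of $K_2$, $K_3$, $K_4$ is a homogeneous polynomial in $l_1,l_2,l_3$ of positive degree (respectively $2$, $3$, and $4$), so all three vanish whenever $l_1=l_2=l_3=0$.

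For $(\Rightarrow)$, I would argue by contradiction. Suppose $(l_1,l_2,l_3)\neq(0,0,0)$ is a common zero of $K_2$, $K_3$, $K_4$. Since $\KTetra$ is cut out by
\[
K_2\XTn_4^2 - 2K_3\XTn_4 + K_4 = 0,
\]
the vanishing of all three coefficients forces the whole projective line $\ell=\{(l_1:l_2:l_3:t):t\in\PP^1\}\subset\PP^3$ to lie on $\KTetra$. The line $\ell$ passes through the node $(0:0:0:1)=\Tetran(\pm 0)$, so transporting through the linear isomorphism $\Tetran^{-1}$ produces an honest projective line on $\KChudnovsky$ passing through the node $\pm 0=(\squaredn_1:\squaredn_2:\squaredn_3:\squaredn_4)$.

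It therefore suffices to rule out the existence of any line on $\KChudnovsky$ through $\pm 0$. My plan is to parametrise a putative such line as $(\squaredn_i+tv_i)_{i=1}^4$ for some nonzero direction vector $v=(v_1,v_2,v_3,v_4)$, substitute into the defining quartic of $\KChudnovsky$, and demand that the coefficient of each power of $t$ vanish identically. The $t^0$ coefficient vanishes because $\pm 0\in\KChudnovsky$, and the $t^1$ coefficient vanishes because $\pm 0$ is a singular point. The $t^2$ coefficient is the equation of the tangent cone at $\pm 0$, which (as $\pm 0$ is an ordinary double point of $\KChudnovsky$) is an irreducible quadric cone, constraining $v$ to a one-parameter family modulo scaling. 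Substituting this parametrisation into the remaining $t^3$ and $t^4$ conditions yields two polynomial equations whose common vanishing would violate one of the non-degeneracy hypotheses on the $\squaredn_i$ recorded in Section~\ref{sec:constants} (in particular, the requirement that the $\squaredn_i$ and the differences $\squaredn_i\squaredn_j-\squaredn_k\squaredn_l$ appearing in the denominators of $\En$, $\Fn$, $\Gn$, $\Hn$ are all nonzero). This contradiction completes the proof.

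\textbf{Main obstacle.} The bulk of the work is the final elimination ruling out a line on $\KChudnovsky$ through $\pm 0$: the polynomials involved are unwieldy, and one must carefully track which non-degeneracy factors are being divided through at each step. A purely algebraic alternative is to work directly with the ideal $(K_2,K_3,K_4)\subset\field[l_1,l_2,l_3]$ and compute a Gr\"obner basis (or iterated resultants) exhibiting each $l_i$ in the radical modulo the standing non-degeneracy relations on the $\squaredn_i$; this trades the geometric setup for heavier polynomial arithmetic but reduces the proof to a bounded symbolic computation.
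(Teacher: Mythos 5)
Your proposal takes essentially the same route as the paper: the \((\Leftarrow)\) direction is the homogeneity observation, and the key step in \((\Rightarrow)\) is exactly the paper's one, namely that vanishing of \(K_2,K_3,K_4\) at \((l_1,l_2,l_3)\neq 0\) would force the entire line \(\lambda\mapsto(l_1:l_2:l_3:\lambda)\) to lie on \(\KTetra\), which is impossible because a Kummer quartic contains no lines. The only difference is that the paper simply asserts this last fact, whereas you sketch (but do not complete) a tangent-cone computation at the node \(\pm 0\) to justify it; that extra verification is not part of the paper's argument, though carrying it out would make the lemma more self-contained.
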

\begin{proof}
    Write \(k_i\) for \(K_i(l_1,l_2,l_3)\).
    If \((l_1,l_2,l_3) = 0\) 
    then \((k_2,k_3,k_4) = 0\),
    because each \(K_i\) is nonconstant and homogeneous.
    Conversely,
    if \((k_2,k_3,k_4) = 0\) 
    and \((l_1,l_2,l_3) \not= 0\)
    then we could embed a line in \(\KTetra\)
    via \(\lambda \mapsto (l_1:l_2:l_3:\lambda)\);
    but this is a contradiction,
    because \(\KTetra\) contains no lines. 
    \qed
\end{proof}


\subsection{Compression and decompression for \(\KChudnovsky\)}

In practice, we compress points on \(\KChudnovsky\)
to tuples \((l_1,l_2,\tau,\sigma)\),
where~\(l_1\) and~\(l_2\) are field elements
and \(\tau\) and~\(\sigma\) are bits.
The recipe is
\begin{enumerate}
    \item
        Map \((\XCn_1:\XCn_2:\XCn_3:\XCn_4)\)
        through \(\Tetran\)
        to a point \((\XTn_1:\XTn_2:\XTn_3:\XTn_4)\) on \(\KTetra\).
    \item
        Compute the unique \((l_1,l_2,l_3,l_4)\)
        in one of the forms
        \((*,*,1,*)\), \((*,1,0,*)\), \((1,0,0,*)\), or \((0,0,0,1)\)
        such that \((l_1:l_2:l_3:l_4)=(\XTn_1:\XTn_2:\XTn_3:\XTn_4)\).
    \item
        Compute
        \(k_2 = K_2(l_1,l_2,l_3)\),
        \(k_3 = K_3(l_1,l_2,l_3)\),
        and
        \(k_4 = K_4(l_1,l_2,l_3)\).
    \item
        Define the bit
        \(\sigma = \mathtt{Sign}(k_2l_4-k_3)\);
        then \((l_1,l_2,l_3,\sigma)\)
        determines \(l_4\).
        Indeed,
        \(q(l_4) = 0\),
        where \(q(X) = k_2X^2 - 2k_3X + k_4\);
        and Lemma~\ref{lemma:nonconstant}
        tells us that \(q(X)\) is either quadratic, linear, or
        identically zero.
        \begin{itemize}
            \item
                If \(q\) is a nonsingular quadratic,
                then \(l_4\) is determined by \((l_1,l_2,l_3)\) 
                and \(\sigma\),
                because \(\sigma = \mathtt{Sign}(R)\)
                where \(R\) is the correct square root in the quadratic formula 
                \(l_4 = (k_3 \pm \sqrt{k_3^2-k_2k_4})/k_2\).
            \item
                If \(q\) is singular or linear,
                then \((l_1,l_2,l_3)\) determines
                \(l_4\),
                and \(\sigma\) is redundant.
            \item
                If \(q = 0\)
                then \((l_1,l_2,l_3) = (0,0,0)\),
                so \(l_4 = 1\);
                again, \(\sigma\) is redundant.
        \end{itemize}
        Setting \(\sigma = \mathtt{Sign}(k_2l_4 - k_3)\)
        in every case, regardless of whether or not we need it to
        determine \(l_4\), avoids ambiguity and simplifies code.
    \item
        The normalization in Step 2 forces
        \(l_3\in\{0,1\}\);
        so encode~\(l_3\) as a single bit \(\tau\).
\end{enumerate}
The datum \((l_1,l_2,\tau,\sigma)\) completely determines
\((l_1,l_2,l_3,l_4)\),
and thus determines \((\XCn_1:\XCn_2:\XCn_3:\XCn_4) =
\Tetran^{-1}((l_1:l_2:l_2:l_4))\).
Conversely,
the normalization in Step~2
ensures that
\((l_1,l_2,\tau,\sigma)\)
is uniquely determined by \((\XCn_1:\XCn_2:\XCn_3:\XCn_4)\),
and is independent of the representative values of the~\(\XCn_i\).

Algorithm~\ref{alg:compress}
carries out the compression process above;
the most expensive step is the computation of an inverse in \(\FF_p\).
Algorithm~\ref{alg:decompress}
is the corresponding
decompression algorithm;
its cost is dominated by computing a square root in~\(\FF_p\).

\begin{algorithm}[ht]
    \setstretch{1.1}
    \caption{Kummer point compression for \(\KChudnovsky\)}
    \label{alg:compress}
    \SetKwProg{Function}{function}{}{}
    \SetKwInOut{Cost}{Cost}
    \SetKwData{k}{k}
    \SetKwData{L}{L}
    \SetKwData{l}{l}
    \SetKwData{R}{R}
    \Function{{\tt Compress}}{
        \KwIn{\(\pm P\) in \(\KChudnovsky(\FF_p)\)}
        \KwOut{%
            \((l_1,l_2,\tau,\sigma)\) with \(l_1, l_2 \in \FF_p\) 
            and \(\sigma,\tau \in \{0,1\}\)
        }
        \Cost{%
            \(
            8\MUL +
            5\SQR +
            12\MLC +
            8\ADD +
            5\SUB +
            1\INV
            \)
        }
        %
        \(
            \Big(
                \begin{array}{l}
                    \L_1,\,\L_2,
                    \\
                    \L_3,\,\L_4
                \end{array}
            \Big)
            \gets 
            \Big(
                \begin{array}{l}
                    \Dotprod(\pm P, (\kappad_4,\kappad_3,\kappad_2,\kappad_1)),
                    \,
                    \Dotprod(\pm P, (\kappad_3,\kappad_4,\kappad_1,\kappad_2)),
                    \\
                    \Dotprod(\pm P, (\kappad_2,\kappad_1,\kappad_4,\kappad_3)),
                    \,
                    \Dotprod(\pm P, (\kappad_1,\kappad_2,\kappad_3,\kappad_4))
                \end{array}
            \Big)
        \)
        \label{alg:compress:1}
        \;
        \uIf{\(\L_3 \not= 0\)}{
            \label{alg:compress:2}
            \((\tau,\lambda) \gets (1,\L_3^{-1})\)
            \tcp*{Normalize to \((*:*:1:*)\)}
        }
        \uElseIf{\(\L_2 \not= 0\)}{
            \((\tau,\lambda) \gets (0,\L_2^{-1})\)
            \tcp*{Normalize to \((*:1:0:*)\)}
        }
        \uElseIf{\(\L_1 \not= 0\)}{
            \((\tau,\lambda) \gets (0,\L_1^{-1})\)
            \tcp*{Normalize to \((1:0:0:*)\)}
        }
        \Else{
            \((\tau,\lambda) \gets (0,\L_4^{-1})\)
            \tcp*{Normalize to \((0:0:0:1)\)}
        }
        \(
            (\l_1,\l_2,\l_4) 
            \gets
            (\L_1\cdot\lambda,\L_2\cdot\lambda,\L_4\cdot\lambda)
        \)
        \label{alg:compress:8}
        \tcp*{\((\l_1:\l_2:\tau:\l_4) = (\L_1:\L_2:\L_3:\L_4)\)}
        \((\k_2,\k_3) \gets (K_2(\l_1,\l_2,\tau),K_3(\l_1,\l_2,\tau))\)
        \ifapp
        \tcp*{See Algorithm~\ref{alg:getk2},\ref{alg:getk3}}
        \else
        \;
        \fi
        \(\R \gets \k_2\cdot \l_4 - \k_3\)
        \;
        \(\sigma \gets \mathtt{Sign}(\R)\)
        \;
        \Return{\((\l_1,\l_2,\tau,\sigma)\)}
    }
\end{algorithm}

\begin{algorithm}[ht]
    \setstretch{1.1}
    \caption{Kummer point decompression to \(\KChudnovsky\)}
    \label{alg:decompress}
    \SetKwProg{Function}{function}{}{}
    \SetKwInOut{Cost}{Cost}
    \SetKwData{L}{L}
    \SetKwData{k}{k}
    \SetKwData{R}{R}
    \Function{{\tt Decompress}}{
        \KwIn{%
            \((l_1,l_2,\tau,\sigma)\) with \(l_1, l_2 \in \FF_p\) 
            and \(\tau,\sigma \in \{0,1\}\)
        }
        \KwOut{The point \(\pm P\) in \(\KChudnovsky(\FF_p)\)
            such that \(\texttt{Compress}(\pm P) = (l_1,l_2,\tau,\sigma)\),
            or \(\bot\) if no such \(\pm P\) exists
        }
        \Cost{%
            \(
            10\MUL +
            9\SQR +
            18\MLC +
            13\ADD +
            8\SUB +
            1\SQRT
            \)
        }
        \(
            (\k_2,\k_3,\k_4) 
            \gets 
            (K_2(l_1,l_2,\tau), K_3(l_1,l_2,\tau), K_4(l_1,l_2,\tau))
        \)
        \ifapp
        \tcp*{Alg.~\ref{alg:getk2},\ref{alg:getk3},\ref{alg:getk4}}
        \else
        \;
        \fi
        \uIf{\(\k_2 = 0\) {\rm\bf and} \(\k_3 = 0\)}{
            \If{\((l_1,l_2,\tau,\sigma) \not= (0,0,0,\mathtt{Sign}(0))\)}{%
                \Return{\(\bot\)}
                \tcp*{Invalid compression}
            }
            \(\L \gets (0,0,0,1)\)
        }
        \uElseIf{\(\k_2 = 0\) {\rm\bf and} \(\k_3 \not= 0\)}{
            \If{\(\sigma \not= \mathtt{Sign}(-\k_3)\)}{%
                \Return{\(\bot\)}
                \tcp*{Invalid compression}
            }
            \(
                \L
                \gets
                (2\cdot l_1\cdot \k_3, 2\cdot l_2\cdot \k_3, 2\cdot \tau \cdot \k_3, \k_4)
            \)
            \tcp*{\(\k_4 = 2\k_3l_4\)}
        }
        \Else{
            \(\Delta \gets \k_3^2 - \k_2\k_4\)
            \;
            \(\R \gets \mathtt{HasSquareRoot}(\Delta,\sigma)\) 
            \tcp*{%
                \(\R = \bot\) or \(\R^2 = \Delta\), \(\mathtt{Sign}(\R) = \sigma\)
            }
            \If{\(\R = \bot\)}{%
                \Return{\(\bot\)}
                \tcp*{No preimage in \(\KTetra(\FF_p)\)}
            }
            \(
                \L
                \gets
                (\k_2\cdot l_1, \k_2\cdot l_2, \k_2\cdot \tau, \k_3 + \R)
            \)
            \tcp*{\(\k_3 + \R = \k_2l_4\)}
        }
        \(
            \Big(
                \begin{array}{l}
                    \XCn_1,\,\XCn_2,
                    \\
                    \XCn_3,\,\XCn_4
                \end{array}
            \Big)
            \gets
            \Big(
                \begin{array}{l}
                    \texttt{Dot}(
                        \L,
                        (\squaredn_4, \squaredn_3, \squaredn_2, \squaredn_1)
                    ),\,
                    \texttt{Dot}(
                        \L,
                        (\squaredn_3, \squaredn_4, \squaredn_1, \squaredn_2)
                    ),
                    \\
                    \texttt{Dot}(
                        \L,
                        (\squaredn_2, \squaredn_1, \squaredn_4, \squaredn_3)
                    ),\,
                    \texttt{Dot}(
                        \L,
                        (\squaredn_1, \squaredn_2, \squaredn_3, \squaredn_4)
                    )
                \end{array}
            \Big)
        \)
        \label{alg:decompress:T-inv}
        \;
        \Return{\((\XCn_1:\XCn_2:\XCn_3:\XCn_4)\)}
    }
\end{algorithm}

\begin{proposition}
    Algorithms~\ref{alg:compress}
    and~\ref{alg:decompress}
    ({\rm \tt Compress} and {\rm \tt Decompress})
    satisfy the following properties:
    given \((l_1,l_2,\tau,\sigma)\) in \(\FF_p^2\times\{0,1\}^2\),
    {\rm \tt Decompress}
    always returns either a valid point in \(\KChudnovsky(\FF_p)\)
    or \(\bot\);
    and 
    for every \(\pm P\)  in \(\KChudnovsky(\FF_p)\),
    we have
    \[
        {\rm \tt Decompress}(
            {\rm \tt Compress}(
                \pm P 
            )
        ) 
        = 
        \pm P 
        \,.
    \]
\end{proposition}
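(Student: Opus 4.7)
The plan is to prove the two claims separately, leaning on the structure established in Lemma~\ref{lemma:nonconstant} and the defining matrix identity~\eqref{eq:alpha-relation}.

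First I would verify that \texttt{Decompress} always produces either a valid point in \(\KChudnovsky(\FF_p)\) or \(\bot\). The final step (Line~\ref{alg:decompress:T-inv}) dots the reconstructed vector \(\L\) against rows of the matrix \(N\) of Equation~\eqref{eq:alpha-relation} permuted by the antidiagonal; by that identity, this computes \(\Tetran^{-1}\) up to the common scalar \(8\aad\bbd\ccd\ddd\). So it suffices to check that \(\L = (l_1, l_2, \tau, l_4)\) always lies on \(\KTetra\). In each of the three branches, \(l_4\) is chosen to satisfy the quadratic \(q(X) = k_2 X^2 - 2k_3 X + k_4\): in the first branch \(K_2 = K_3 = K_4 = 0\) on \((l_1,l_2,\tau)\), and Lemma~\ref{lemma:nonconstant} then forces \((l_1,l_2,\tau) = 0\), so \(\L = (0,0,0,1) = \Tetran(\pm 0)\); in the second branch \(q\) degenerates to a linear equation with unique root \(l_4 = k_4/(2k_3)\), and \(\L\) is its homogenization; in the third \(l_4 = (k_3 + R)/k_2\) with \(R^2 = k_3^2 - k_2 k_4\), and the returned \(\L\) is rescaled by \(k_2\). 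In each case \(q(l_4) = 0\), so \(\L \in \KTetra\), and \(\bot\) is returned precisely when the sign check fails or \(\Delta\) is a non-square.

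Next I would verify the round-trip. Given \(\pm P \in \KChudnovsky(\FF_p)\), \texttt{Compress} maps it to \((L_1:L_2:L_3:L_4) = \Tetran(\pm P) \in \KTetra(\FF_p)\); the normalization in Lines~\ref{alg:compress:2}--\ref{alg:compress:8} produces the unique affine representative \((l_1, l_2, l_3, l_4)\) of the form dictated by which coordinates vanish, so in particular \(l_3 \in \{0,1\}\) is recoverable from \(\tau\). Then \(k_2 := K_2(l_1,l_2,\tau)\), \(k_3 := K_3(l_1,l_2,\tau)\), and \(\sigma := \mathtt{Sign}(k_2 l_4 - k_3)\) are all determined by \(\pm P\). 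I would then run \texttt{Decompress} on \((l_1,l_2,\tau,\sigma)\) and check case-by-case that the reconstructed point equals \(\Tetran(\pm P)\) (up to projective scale), from which \(\Tetran^{-1}\) recovers \(\pm P\). In the non-degenerate branch, \(l_4\) is a root of \(q\); the two roots are distinguished by the sign of \(k_2 X - k_3 = \pm R\), so \(\sigma\) selects the correct one. In the linear branch \(q\) has a unique root, which is automatically \(l_4\); the sign check matches because \(\mathtt{Sign}(k_2 l_4 - k_3) = \mathtt{Sign}(-k_3)\). In the zero branch, Lemma~\ref{lemma:nonconstant} ensures \((l_1,l_2,\tau) = 0\) and the conventional \(l_4 = 1\), \(\sigma = \mathtt{Sign}(0)\) are reproduced.

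The main obstacle will be the bookkeeping in the degenerate branches, particularly ensuring that the \textbf{invalid-compression} tests inside \texttt{Decompress} reject exactly those \((l_1,l_2,\tau,\sigma)\) not in the image of \texttt{Compress} while still accepting every genuine compression. This requires checking that, for points with \(k_2 = 0\), the sign bit produced by \texttt{Compress} exactly matches \(\mathtt{Sign}(-k_3)\) (or \(\mathtt{Sign}(0)\) in the trivial subcase), so the tests are never triggered on honest input. A secondary subtlety is that the dot products at the end of \texttt{Decompress} implement \(\Tetran^{-1}\) only up to the common factor \(8\aad\bbd\ccd\ddd\); this is harmless projectively but should be mentioned explicitly when identifying the output with \(\Tetran^{-1}(\L)\). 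Everything else reduces to the uniqueness of the normalization and Lemma~\ref{lemma:nonconstant}.
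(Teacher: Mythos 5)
Your proposal is correct and follows essentially the same route as the paper's own proof: a case analysis on \((k_2,k_3)\) driven by Lemma~\ref{lemma:nonconstant}, with \(l_4\) recovered as a root of \(q(X)=k_2X^2-2k_3X+k_4\) selected by \(\sigma=\mathtt{Sign}(k_2l_4-k_3)\), followed by the pullback through \(\Tetran^{-1}\). Your explicit remarks on the harmless projective factor \(8\aad\bbd\ccd\ddd\) and on verifying that honest compressions never trip the rejection tests are sound refinements of the same argument rather than a different approach.
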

\begin{proof}
    In Algorithm~\ref{alg:decompress}
    we are given \((l_1,l_2,\tau,\sigma)\).
    We can immediately set \(l_3 = \tau\),
    viewed as an element of \(\FF_p\).
    We want to compute an \(l_4\) in \(\FF_p\),
    if it exists,
    such that \(k_2l_4^2 - 2k_3l_4 + k_4 = 0\)
    and \(\mathtt{Sign}(k_2l_4 - l_3) = \sigma\)
    where \(k_i = K_i(l_1,l_2,l_3)\).
    If such an \(l_4\) exists,
    then we will have a preimage \((l_1:l_2:l_3:l_4)\) in
    \(\KTetra(\FF_p)\),
    and we can return the decompressed
    \(\Tetran^{-1}((l_1:l_2:l_3:l_4))\) in \(\KChudnovsky\).

    If \((k_2,k_3) = (0,0)\) 
    then \(k_4 = 2k_3l_4-k_2l_4^2 = 0\),
    so \(l_1 = l_2 = \tau = 0\)
    by Lemma~\ref{lemma:nonconstant}.
    The only legitimate datum in this form is
    is \((l_1:l_2:\tau:\sigma) = (0:0:0:\mathtt{Sign}(0))\).
    If this was the input,
    then the preimage is \((0:0:0:1)\);
    otherwise we return \(\bot\).
    
    If \(k_2 = 0\) but \(k_3 \not= 0\),
    then 
    \(k_4 = 2k_3l_4\),
    so \((l_1:l_2:\tau:l_4) = (2k_3l_1:2k_3l_2:2k_3\tau:k_4)\).
    The datum is a valid compression unless
    \(\sigma \not= \mathtt{Sign}(-k_3)\),
    in which case we return \(\bot\);
    otherwise,
    the preimage is
    \((2k_3l_1:2k_3l_2:2k_3\tau:k_4)\).

    If \(k_2 \not= 0\),
    then the quadratic formula tells us that
    any preimage satisfies
    \(k_2l_4 = k_3 \pm \sqrt{k_3^2 - k_2k_4}\),
    with the sign determined by \(\mathtt{Sign}(k_2l_4-k_3)\).
    If \(k_3^2 - k_2k_4\) is not a square in \(\FF_p\)
    then there is no such \(l_4\) in \(\FF_p\);
    the input is illegitimate, so we return \(\bot\).
    Otherwise, we have a preimage
    \(
        (k_2l_1:k_2l_2:k_2l_3:l_3\pm\sqrt{k_3^2-k_2k_4})
    \).

    Line~\ref{alg:decompress:T-inv}
    maps the preimage \((l_1:l_2:l_3:l_4)\) in \(\KTetra(\FF_p)\)
    back to \(\KChudnovsky(\FF_p)\)
    via \(\Tetran^{-1}\), 
    yielding the decompressed point \((\XCn_1:\XCn_2:\XCn_3:\XCn_4)\).
    \qed
\end{proof}

\subsection{Using cryptographic parameters}

Our compression scheme works out particularly nicely 
for the Gaudry--Schost Kummer over \(\FF_{2^{127}-1}\).
First, since every field element fits into 127 bits,
every compressed point fits into exactly 256 bits.
Second,
the auxiliary constants are small:
we have
\(
    (\kappad_1:\kappad_2:\kappad_3:\kappad_4) 
    =
    ( -961 : 128 : 569 : 1097 )
\),
each of which fits into well under 16 bits.
Computing the polynomials \(K_2\), \(K_3\), \(K_4\)
and dividing them all through by \(11^2\) 
(which does not change the roots of the quadratic)
gives
\begin{align}
    K_2(l_1,l_2,\tau)
    & =
    (q_5 l_1)^2 + (q_3 l_2)^2 + (q_4\tau)^2
    - 2q_3\big(
        q_2 l_1l_2 + \tau(q_0 l_1 - q_1 l_2)
    \big)
    \,,
    \label{eq:k2}\\
    K_3(l_1,l_2,\tau)
    & =
    q_3\big(
        q_0(l_1^2 + \tau)l_2 - q_1 l_1(l_2^2 + \tau) +
        q_2(l_1^2+l_2^2)\tau
    \big)
    - q_6q_7 l_1l_2\tau
    \,,
    \label{eq:k3}\\
    K_4(l_1,l_2,\tau)
    & =
    (
        (q_3 l_1)^2
        + (q_5 l_2)^2
        - 2q_3l_1l_2\big(
            q_0 l_2 - q_1l_1 + q_2
        \big)
    )\tau
    + (q_4 l_1l_2)^2
    \,,\label{eq:k4}%
\end{align}
where
\(
    (q_0,\ldots,q_7)
    =
    (3575,
    9625,
    4625,
    12259,
    11275,
    7475,
    6009,
    43991
    )
\);
each of the \(q_i\) fits into 16 bits.
In total, the twelve new constants we need for 
{\tt Compress} and {\tt Decompress}
together fit into less than two field elements' worth of space.

%

\def\kummeravrcodesize{\(17\,880\)\xspace}
\def\kummeravrsigncycles{\(10\,477\,347\)\xspace}
\def\kummeravrverifycycles{\(20\,423\,937\)\xspace}
\def\kummeravrsignstack{\(417\)\xspace}
\def\kummeravrverifystack{\(609\)\xspace}

\def\kummerarmcodesize{\(18\,064\)\xspace}
\def\kummerarmsigncycles{\(2\,908\,215\)\xspace}
\def\kummerarmverifycycles{\(5\,694\,414\)\xspace}
\def\kummerarmsignstack{\(580\)\xspace}
\def\kummerarmverifystack{\(808\)\xspace}

\def\ellipticavrcodesize{\(21\,347\)\xspace}
\def\ellipticavrsigncycles{\(14\,067\,995\)\xspace}
\def\ellipticavrverifycycles{\(25\,355\,140\)\xspace}
\def\ellipticavrsignstack{\(512\)\xspace}
\def\ellipticavrverifystack{\(644\)\xspace}

\def\ellipticarmcodesize{\(18\,443\)\xspace}
\def\ellipticarmsigncycles{\(3\,889\,116\)\xspace}
\def\ellipticarmverifycycles{\(6\,793\,695\)\xspace}
\def\ellipticarmsignstack{\(660\)\xspace}
\def\ellipticarmverifystack{\(788\)\xspace}


\def\kummeravrladder{\(9\,624\,637\)\xspace}
\def\kummeravrcheck{\(84\,424\)\xspace}
\def\kummeravrcompress{\(212\,374\)\xspace}
\def\kummeravrdecompress{\(211\,428\)\xspace}

\def\kummerarmladder{\(2\,683\,371\)\xspace}
\def\kummerarmcheck{\(24\,249\)\xspace}
\def\kummerarmcompress{\(62\,165\)\xspace}
\def\kummerarmdecompress{\(62\,471\)\xspace}

\def\ellipticavrladder{\(12\,539\,098\)\xspace}
\def\ellipticavrcheck{\(46\,546\)\xspace}
\def\ellipticavrcompress{\(1\,067\,004\)\xspace}
\def\ellipticavrdecompress{\(694\)\xspace}

\def\ellipticarmladder{\(3\,338\,554\)\xspace}
\def\ellipticarmcheck{\(17\,044\)\xspace}
\def\ellipticarmcompress{\(270\,867\)\xspace}
\def\ellipticarmdecompress{\(102\)\xspace}

\section{Implementation}\label{sec:imp}

In this section we present the results of the implementation of the scheme on the
AVR ATmega and ARM Cortex M0 platforms.
We have a total of four implementations:
on both platforms we implemented
both the Curve25519-based scheme and 
the scheme based on a fast Kummer surface in genus 2.
The benchmarks for the AVR software are obtained from
the Arduino MEGA development board containing
an ATmega2560 MCU, compiled with GCC v4.8.1.
For the Cortex M0, they
are measured on the STM32F051R8 MCU on the
STMF0Discovery board, compiled with Clang v3.5.0.
We refer to the (publicly available) code for more detailed compiler settings.
For both Diffie--Hellman and signatures we follow the eBACS~\cite{ebacs} API.


\subsection{Core functionality}

The arithmetic of the underlying finite fields is well-studied and optimized,
and we do not reinvent the wheel.
For field arithmetic in \(\F_{2^{255}-19}\)
we use the highly optimized 
functions presented by Hutter and Schwabe~\cite{HS13} for the AVR ATmega,
and
the code from D\"{u}ll et al.~\cite{DHH+15} for the Cortex M0.
For arithmetic in \(\F_{2^{127}-1}\) 
we use the functions from Renes et al.~\cite{RSSB16},
which in turn rely on~\cite{HS13} for the AVR ATmega,
and on~\cite{DHH+15} for the Cortex M0.

The {\tt SHAKE128} functions for the ATmega are taken from~\cite{Keccak}, while
on the Cortex M0 we use a modified version from~\cite{AJS16}.
Cycle counts for the main functions defined in the rest of this paper
are presented in Table~\ref{tab:counts}.
Notably, the {\tt Ladder} routine is by far the most expensive function.
In genus 1 the {\tt Compress} function is relatively costly 
(it is essentially an inversion), while in genus 2
{\tt Check}, {\tt Compress} and {\tt Decompress}
have only minor impact on the total run-time.
More interestingly, as seen in Table~\ref{tab:avrsig} and Table~\ref{tab:armsig},
the simplicity of operating only on the Kummer variety allows 
		 smaller code and less stack usage.

\begin{table}[ht]
\centering
\renewcommand{\tabcolsep}{0.1cm}
\renewcommand{\arraystretch}{1.1}
	\begin{tabular}{|c|c|c|r|r|}
\hline
{\bf Genus} & 
{\bf Function} & 
{\bf Ref.} & 
{\bf AVR ATmega} & 
{\bf ARM Cortex M0} \\
\hline
\hline
\multirow{4}{*}{1}
& {\tt Ladder}
& \ifapp Alg.~\ref{alg:montladder}
    \else -- \fi
& \ellipticavrladder
& \ellipticarmladder
\\
& {\tt Check}
& Alg.~\ref{alg:mont-check}
& \ellipticavrcheck
&	\ellipticarmcheck
\\
& {\tt Compress}
& \S\ref{subsec:ecmont}
& \ellipticavrcompress
& \ellipticarmcompress
\\
& {\tt Decompress}
& \S\ref{subsec:ecmont}
& \ellipticavrdecompress
& \ellipticarmdecompress
\\
\hline
\hline
\multirow{4}{*}{2}
& {\tt Ladder}
& \ifapp Alg.~\ref{alg:ladder-kummer}
    \else -- \fi
& \kummeravrladder
& \kummerarmladder
\\
& {\tt Check}\footnotemark
& Alg.~\ref{alg:kummer-check}
& \kummeravrcheck
& \kummerarmcheck
\\
& {\tt Compress}
& Alg.~\ref{alg:compress}
& \kummeravrcompress
& \kummerarmcompress
\\
& {\tt Decompress}
& Alg.~\ref{alg:decompress}
& \kummeravrdecompress
& \kummerarmdecompress
\\
\hline
\end{tabular}
\vspace{0.2cm}
	\caption{
		Cycle counts for the four key functions of \xsign
		at the 128-bit security level.
	}
	\label{tab:counts}
\end{table}
\footnotetext{The implementation decompresses \(\pm\R\)
	within {\tt Check}, while Algorithm~\ref{alg:kummer-check} assumes
		\(\pm\R\) to be decompressed.
We have subtracted the cost of the {\tt Decompress} function once.}

\subsection{Comparison to previous work}

There are not many implementations of complete signature
and key exchange schemes on microcontrollers.
On the other hand, there are implementations of scalar
multiplication on elliptic curves.
The current fastest on our platforms are 
presented by D\"{u}ll et al.~\cite{DHH+15},
and since we are relying on exactly the same 
arithmetic, we have essentially the same
results.
Similarly, the current records for 
scalar multiplication on Kummer surfaces
are presented by Renes et al.~\cite{RSSB16}.
Since we use the same underlying functions, 
we have similar results.

More interestingly, we compare the speed and memory
usage of signing and verification to best known
results of implementations of complete signature schemes.
To the best of our knowledge, the only other works are
the Ed25519-based scheme by Nascimento et al~\cite{NLD15},
the Four\(\mathbb{Q}\)-based scheme 
(obtaining fast scalar multiplication by relying on easily computable
endomorphisms)
by Liu et al~\cite{LLP+17},
and the genus-2 implementation from~\cite{RSSB16}.

\paragraph{AVR ATmega.} As we see in Table~\ref{tab:avrsig}, our implementation
of the scheme based on Curve25519 outperforms the Ed25519-based scheme
from~\cite{NLD15} in every way.
It reduces the number of clock cycles needed for {\tt sign} resp. {\tt verify}
by more than 26\% resp. 17\%, while reducing
stack usage by more than 65\% resp. 47\%.
Code size is not reported in~\cite{NLD15}.
Comparing against the Four\(\mathbb{Q}\) implementation
of~\cite{LLP+17}, we see a clear trade-off
between speed and size:
Four\(\mathbb{Q}\) has a clear speed advantage,
but \xsign on Curve25519 requires only a fraction of the stack space.

The implementation based on the Kummer surface of the
genus-2 Gaudry--Schost Jacobian does better than the 
Curve25519-based implementation across the board.
Compared to~\cite{RSSB16},
the stack usage of {\tt sign} resp. {\tt verify}
decreases by more than 54\% resp. 38\%, while decreasing code size
by about 11\%. On the other hand, verification is
about 26\% slower. This is explained by the fact
that in~\cite{RSSB16} the signature is compressed to
48 bytes (following Schnorr's suggestion),
which means that one of the scalar multiplications
in verification is only half length.
Comparing to the Four\(\mathbb{Q}\) implementation of~\cite{LLP+17},
again we see a clear trade-off
between speed and size,
but this time the loss of speed is less pronounced
than in the comparison with Curve25519-based \xsign.

\begin{table}[ht]
\centering
\renewcommand{\tabcolsep}{0.1cm}
\renewcommand{\arraystretch}{1.1}
	\begin{tabular}{|c|c|c|r|r|r|}
\hline
{\bf Ref.} & 
{\bf Object} & 
{\bf Function} & 
{\bf Clock cycles} & 
{\bf Stack} &
{\bf Code size\footnotemark} 
\\
\hline
\hline
\multirow{2}{*}{\cite{NLD15}} &
\multirow{2}{*}{Ed25519} &
{\tt sign} &
\(19\,047\,706\)\xspace &
\(1\,473\)\xspace bytes &
\multirow{2}{*}{--}
\\
& 
& 
{\tt verify} &
\(30\,776\,942\)\xspace &
\(1\,226\)\xspace bytes &
\\
\hline
\multirow{2}{*}{\cite{LLP+17}} &
\multirow{2}{*}{Four\(\mathbb{\Q}\)} &
{\tt sign} &
\(5\,174\,800\)\xspace &
\(1\,572\)\xspace bytes &
25\,354 bytes
\\
& 
& 
{\tt verify} &
\(11\,003\,800\)\xspace &
\(4\,957\)\xspace bytes &
33\,372 bytes
\\
\hline
\multirow{2}{*}{{\bf This work}} &
\multirow{2}{*}{Curve25519} &
{\tt sign} &
\ellipticavrsigncycles &
\ellipticavrsignstack bytes &
\multirow{2}{*}{\ellipticavrcodesize bytes}
\\
& 
& 
{\tt verify} &
\ellipticavrverifycycles &
\ellipticavrverifystack bytes &
\\
\hline
\hline
\multirow{2}{*}{\cite{RSSB16}} &
Gaudry-- &
{\tt sign} &
\(10\,404\,033\)\xspace &
\(926\)\xspace bytes &
\multirow{2}{*}{\(20\,242\)\xspace bytes}
\\
&
Schost \(\Jac\)
& 
{\tt verify} &
\(16\,240\,510\)\xspace &
\(992\)\xspace bytes &
\\
\hline
\multirow{2}{*}{{\bf This work}} &
Gaudry-- &
{\tt sign} &
\kummeravrsigncycles &
\kummeravrsignstack bytes &
\multirow{2}{*}{\kummeravrcodesize bytes} 
\\
& 
Schost \(\Kum\)
& 
{\tt verify} &
\kummeravrverifycycles &
\kummeravrverifystack bytes &
\\
\hline
\end{tabular}
\vspace{0.2cm}
	\caption{
		Performance comparison of the \xsign signature scheme against 
			the current best implementations, on the AVR ATmega platform.
	}
	\label{tab:avrsig}
\end{table}
\footnotetext{%
All reported code sizes except those from~\cite[Table 6]{LLP+17}
include support for both signatures and key exchange.}

\paragraph{ARM Cortex M0.} 
In this case there is no elliptic-curve-based signature
scheme to compare to, so we present the first.
As we see in Table~\ref{tab:armsig},
it is significantly slower than its genus-2 counterpart
in this paper
(as should be expected), 
while using a similar amount of stack and code.
The genus-2 signature scheme has similar trade-offs on
this platform when compared to the implementation by
Renes et al.~\cite{RSSB16}.
The stack usage for {\tt sign} resp. {\tt verify} is reduced
by about 57\% resp. 43\%, while code size is reduced
by about 8\%. For the same reasons as above, 
verification is about 28\% slower.

\begin{table}[ht]
\centering
\renewcommand{\tabcolsep}{0.1cm}
\renewcommand{\arraystretch}{1.1}
	\begin{tabular}{|c|c|c|r|r|r|}
\hline
{\bf Ref.} & 
{\bf Object} & 
{\bf Function} & 
{\bf Clock cycles} & 
{\bf Stack} &
{\bf Code size}\footnotemark
\\
\hline
\hline
\multirow{2}{*}{{\bf This work}} &
\multirow{2}{*}{Curve25519} &
{\tt sign} &
\ellipticarmsigncycles &
\ellipticarmsignstack bytes &
\multirow{2}{*}{\ellipticarmcodesize bytes}
\\
& 
& 
{\tt verify} &
\ellipticarmverifycycles &
\ellipticarmverifystack bytes &
\\
\hline
\hline
\multirow{2}{*}{\cite{RSSB16}} &
        Gaudry-- &
{\tt sign} &
\(2\,865\,351\)\xspace &
\(1\,360\)\xspace bytes &
\multirow{2}{*}{\(19\,606\)\xspace bytes}
\\
& 
Schost \(\Jac\) & 
{\tt verify} &
\(4\,453\,978\)\xspace &
\(1\,432\)\xspace bytes &
\\
\hline
\multirow{2}{*}{{\bf This work}} &
Gaudry-- &
{\tt sign} &
\kummerarmsigncycles &
\kummerarmsignstack bytes &
\multirow{2}{*}{\kummerarmcodesize bytes} 
\\
& 
Schost \(\Kum\) & 
{\tt verify} &
\kummerarmverifycycles &
\kummerarmverifystack bytes &
\\
\hline
\end{tabular}
\vspace{0.2cm}
	\caption{
		Performance comparison of the \xsign signature scheme against 
			the current best implementations, on the ARM Cortex M0 platform.
			}
	\label{tab:armsig}
\end{table}
\footnotetext{In this work 8\,448 bytes come from the {\tt SHAKE128}
implementation, while~\cite{RSSB16} uses 6\,938 bytes.
One could probably reduce this
significantly by optimizing the implementation, or by
using a more memory-friendly hash function.}

\ifpublic
\subsubsection{Acknowledgements.}
We thank Peter Schwabe for his valuable contributions
to discussions during the creation of this paper,
and the anonymous Asiacrypt reviewers for their
helpful comments.
\fi


\bibliographystyle{abbrv}
\bibliography{bib}

\ifapp
\appendix
\section{
    Elliptic implementation details
}
\label{sec:elliptic-impl}

The algorithms in this section
complete the description of elliptic \xsign in~\S\ref{sec:elliptic}.

\subsection{Pseudoscalar multiplication}

The {\tt keypair}, {\tt sign}, and {\tt verify}
functions all require {\tt Ladder},
which we define below.
Algorithm~\ref{alg:montladder}
describes the scalar pseudomultiplication
that we implemented for Montgomery curves,
closely following our C reference implementation.
To make our {\tt Ladder} constant-time,
we use a \emph{conditional swap} procedure {\tt CSWAP}.
This takes a single bit and a pair of items as arguments,
and swaps those items if and only if the bit is \(1\).

\begin{algorithm}[ht]
		\setstretch{1.1} 
    \caption{\texttt{Ladder}:
        the Montgomery ladder for elliptic pseudo-multiplication on \(\PP^1\),
        using a combined differential double-and-add 
        (Algorithm~\ref{alg:xDBLADD-mont}).
    }
    \label{alg:montladder}
    \SetKwProg{Function}{function}{}{}
    \SetKwInOut{Cost}{Cost}
    \SetKwData{bit}{{\tt bit}}
    \SetKwData{prevbit}{{\tt prevbit}}
    \SetKwData{swap}{{\tt swap}}
    \SetKwData{x}{\(x\)}
    \SetKwData{U}{U}
    \SetKwData{V}{V}
    \SetKwData{W}{W}
	\Function{{\tt Ladder}}{
        \KwIn{%
            \(m = \sum_{i=0}^{255}m_i2^i \in \ZZ\)
            and
            \(\pm P = (x:1)\in\PP^1(\FF_p)\)\,, \(x\neq0\)
        }
        \KwOut{%
            \(\pm[m]P\)
        }
        \Cost{%
            \(
            1280\MUL +
            1024\SQR + 
            256\MLC + 
            1024\ADD +
            1024\SUB
            \)
        }
        \( \prevbit \gets 0 \)
        \;
        \(
            (\V_0,\V_1) 
            \gets 
            \big(
                ( 1 : 0 )
                ,
                \pm P
            \big)
        \)
        \;
        \For{\(i = 255\) {\rm\bf down to} \(0\)}{%
            \( 
                (\bit, \prevbit, \swap) 
                \gets 
                (m_i, \bit, \bit \oplus \prevbit) 
            \)
            \;
            \( {\tt CSWAP}(\swap, (\V_0, \V_1)) \)
            \;
            \( \xDBLADD(\V_0, \V_1, \x) \)
            \;
        }
        \( {\tt CSWAP}(\bit, (\V_0, \V_1)) \) \;
        \Return{\(\V_0\)}
	}
\end{algorithm}

Algorithm~\ref{alg:xDBLADD-mont}
implements {\tt xDBLADD} for Montgomery curves
in the usual way.
Note that the assumption that \(\pm(\Pc-\Q)\not\in\{(1 : 0), (0 : 1)\}\)
implies that {\tt xDBLADD} will always return the correct result.

\begin{algorithm}
    \setstretch{1.1}
    \caption{{\tt xDBLADD}:
        combined pseudo-addition and doubling on \(\PP^1\).
    }
    \label{alg:xDBLADD-mont}
    \SetKwProg{Function}{function}{}{}
    \SetKwInOut{Cost}{Cost}
    \SetKwData{U}{U}
    \SetKwData{V}{V}
    \SetKwData{W}{W}
    \Function{{\tt xDBLADD}}{
        \KwIn{%
            \(\pm\Pc = (X^P : Z^P)\)
            and
            \(\pm\Q = (X^Q : Z^Q)\)
            in \(\PP^1(\FF_q)\),
            and \(x \in \FF_q^*\)
            such that \((x:1) = \pm(\Pc-\Q)\)
        }
        \KwOut{%
            \((\pm [2]\Pc,\pm(\Pc+\Q))\)
        }
        \Cost{%
            \(
            5\MUL +
            4\SQR + 
            1\MLC + 
            4\ADD +
            4\SUB
            \)
        }
        \begin{multicols}{2}
            \((\U_0,\U_1,\V_0,\V_1) \gets (X^P,Z^P,X^Q,Z^Q)\)
            \;
            \((\W_0,\W_1) \gets (\U_0+\U_1,\U_0-\U_1)\)
            \;
            \((\U_0,\U_1) \gets (\V_0+\V_1,\V_0-\V_1)\)
            \;	
            \((\V_0,\U_1) \gets (\W_0\cdot\U_1, \W_1\cdot\U_0) \)
            \;
            \((\U_0,\V_1) \gets (\V_0+\U_1,\V_0-\U_1)\)
            \;
            \((\U_0,\V_0,\V_1) \gets (\U_0^2,\V_0^2,x\cdot\U_0)\)
            \;
            \((\W_0,\U_0) \gets (\W_1^2,\W_0^2)\)
            \;
            \(\U_1 \gets \U_0 - \W_0\)
            \;
            \(\U_0 \gets \W_0\cdot\U_0\)
            \;
            \(\W_1 \gets \frac{A+2}{4}\cdot\U_1\)
            \;
            \(\W_1 \gets \W_0\cdot\W_1\)
            \;
            \(\U_1 \gets \W_1\cdot\U_1\)
            \;
            \Return{\(\big((\U_0,\U_1),(\V_0,\V_1)\big)\)}
        \end{multicols}
    }
\end{algorithm}

\subsection{The {\tt BValues} subroutine for signature verification}

The elliptic version of the crucial {\tt Check} subroutine of {\tt verify} 
(Algorithm~\ref{alg:mont-check})
used a function {\tt BValues}
to calculate the values of the biquadratic forms
\(B_{XX}\), \(B_{XZ}\), and \(B_{ZZ}\).
This function can be implemented in a number of ways,
with different optimizations for speed or stack usage.
Algorithm~\ref{alg:bvalues}
illustrates the approach we used for {\tt BValues},
motivated by simplicity and stack minimisation.

\begin{algorithm}[ht]
    \setstretch{1.1}
    \caption{        
        \({\tt BValues}\):
        evaluates 
        \(B_{XX}\), \(B_{XZ}\), and \(B_{ZZ}\)
        on \(\PP^1\).
    }
    \label{alg:bvalues}
    \SetKwProg{Function}{function}{}{}
    \SetKwInOut{Cost}{Cost}
    \SetKwData{t}{T}
    \SetKwData{U}{U}
    \SetKwData{V}{V}
    \SetKwData{W}{W}
	\Function{\({\tt BValues}\)}{
        \KwIn{%
            \(\pm \Pc=(X^P:Z^P)\), \(\pm \Q=(X^Q:Z^Q)\) in \(\K(\FF_p)\)
        }
        \KwOut{%
            \(\left(B_{XX}(\pm P,\pm Q),B_{XZ}(\pm P,\pm Q),B_{ZZ}(\pm
            P, \pm Q)\right)\) in \(\FF_p^3\)
        }
        \Cost{%
            \(
            6\MUL +
            2\SQR + 
            1\MLC + 
            7\ADD +
            3\SUB
            \)
        }
        \begin{multicols}{2}
        \( (\t_0,\t_1) \gets (X^P \cdot X^Q, Z^P \cdot Z^Q) \)
        \;
        \( \U \gets (\t_0 - \t_1)^2 \)
        \;
        \( \t_0 \gets \t_0 + \t_1 \)
        \;
        \( (\t_1,\t_2) \gets (X^P \cdot Z^Q, X^Q \cdot Z^P) \)
        \;
        \( \W \gets (\t_1 - \t_2)^2 \)
        \;
        \( \V \gets \t_0\cdot(\t_1 + \t_2) \)
        \;
        \( \t_0 \gets 4 \cdot \t_1 \cdot \t_2 \)
        \;
        \( \t_1 \gets 2 \cdot \t_0 \)
        \;
        \( \t_1 \gets \frac{A+2}{4} \cdot \t_1 \)
        \;
        \( \V \gets \V + \t_1 - \t_0 \)
        \;
        \Return{\((\U,\V,\W)\)}
        \end{multicols}
	}
\end{algorithm}

\section{
    Kummer surface implementation details
}
\label{sec:kummer-impl}

The algorithms in this section 
complete the description of Kummer \xsign
in~\S\S\ref{sec:kummer-arithmetic}-\ref{sec:comp}.
They follow our C reference implementation very closely.
Recall that we have the following subroutines:
\begin{itemize}
    \item \(\PPmul\) implements a 4-way parallel multiplication.
        It takes a pair of vectors 
        \((x_1,x_2,x_3,x_4)\)
        and
        \((y_1,y_2,y_3,y_4)\)
        in \(\FF_p^4\),
        and returns \((x_1y_1,x_2y_2,x_3y_3,x_4y_4)\).
    \item
        \(\PPsqr\) implements a 4-way parallel squaring.
        Given a vector 
        \((x_1,x_2,x_3,x_4)\) in \(\FF_p^4\),
        it returns \((x_1^2,x_2^2,x_3^2,x_4^2)\).
    \item
        \(\Hadamard\) implements a Hadamard transform.
        Given a vector 
        \((x_1,x_2,x_3,x_4)\) in \(\FF_p^4\),
        it returns 
        \(
            (
            x_1 + x_2 + x_3 + x_4,
            x_1 + x_2 - x_3 - x_4,
            x_1 - x_2 + x_3 - x_4,
            x_1 - x_2 - x_3 + x_4
            )
        \).
    \item
        \(\Dotprod\) computes the sum of a 4-way multiplication.
        Given a pair of vectors 
        \((x_1,x_2,x_3,x_4)\)
        and
        \((y_1,y_2,y_3,y_4)\)
        in \(\FF_p^4\),
        it returns \(x_1y_1+x_2y_2+x_3y_3+x_4y_4\).
\end{itemize}

\subsection{Scalar pseudomultiplication}

The Montgomery {\tt Ladder} for scalar pseudomultiplication on 
\(\KChudnovsky\) 
is implemented in Algorithm~\ref{alg:ladder-kummer},
replicating the approach in~\cite{RSSB16}.
It relies on the {\tt WRAP} and {\tt xDBLADD} functions,
implemented in Algorithm~\ref{alg:WRAP} respectively~\ref{alg:xDBLADD-kummer}.
The function {\tt WRAP}
takes a Kummer point \(\pm P\) in \(\KChudnovsky(\FF_p)\)
and returns \(w_2\), \(w_3\), and \(w_4\) in \(\FF_p\)
such that
\(
    (1:w_2:w_3:w_4) =
    (1/\XCn_1^P:1/\XCn_2^P:1/\XCn_3^P:1/\XCn_4^P)
\).
The resulting values are required in every {\tt xDBLADD} within {\tt
Ladder};
the idea is to compute them once with a single inversion
at the start of the procedure,
thus avoiding further expensive inversions.
We note that this ``wrapped'' form of the point \(\pm P\)
was previously used as a compressed form for Kummer point transmission,
but since it requires three full field values
it is far from an optimal compression.


\begin{algorithm}[ht]
    \setstretch{1.0}
    \caption{\texttt{Ladder}:
        the Montgomery ladder for pseudomultiplication on
        \(\KChudnovsky\),
        based on a combined differential double-and-add
        (Algorithm~\ref{alg:xDBLADD-kummer}).
    }
    \label{alg:ladder-kummer}
    \SetKwProg{Function}{function}{}{}
    \SetKwInOut{Cost}{Cost}
    \SetKwData{bit}{{\tt bit}}
    \SetKwData{prevbit}{{\tt prevbit}}
    \SetKwData{swap}{{\tt swap}}
    \SetKwData{V}{V}
    \SetKwData{W}{W}
	\Function{{\tt Ladder}}{
        \KwIn{%
            \(m = \sum_{i=0}^{255}m_i2^i \in \ZZ\)
            and
            \(\pm P\in\KChudnovsky(\FF_p)\)
        }
        \KwOut{%
            \(\pm[m]P\)
        }
        \Cost{%
            \(
            1799\MUL +
            3072\SQR + 
            3072\MLC + 
            4096\ADD +
            4096\SUB +
            1\INV
            \)
        }
        \( \prevbit \gets 0\)
        \;
        \( \W \gets \texttt{WRAP}(\pm P) \)
        \;
        \(
            (\V_0,\V_1) 
            \gets 
            \big(
                (\squaredn_1:\squaredn_2:\squaredn_3:\squaredn_4)
                ,
                \pm P
            \big)
        \)
        \;
        \For{\(i = 255\) {\rm\bf down to} \(0\)}{%
            \( (\bit,\prevbit,\swap) \gets (m_i,\bit,\bit \oplus \prevbit) \)
            \;
            \( {\tt CSWAP}(\swap, (\V_0, \V_1)) \)
            \;
            \( \xDBLADD(\V_0, \V_1, \W) \)
            \;
        }
        \( {\tt CSWAP}(\bit, (\V_0, \V_1)) \) \;
        \Return{\(\V_0\)}
	}
\end{algorithm}


\begin{algorithm}[ht]
    \setstretch{1.1}
    \caption{{\tt WRAP}: (pre)computes inverted Kummer point coordinates.}
    \label{alg:WRAP}
    \SetKwProg{Function}{function}{}{}
    \SetKwInOut{Cost}{Cost}
    \SetKwData{V}{V}
    \SetKwData{W}{W}
    \Function{{\tt WRAP}}{
        \KwIn{%
            \(\pm P\in\KChudnovsky(\FF_p)\)
        }
        \KwOut{%
            \((w_2,w_3,w_4)\in \FF_p^3\)
            such that 
            \(
                (1:w_2:w_3:w_4) 
                =
                (1/\XCn_1^P:1/\XCn_2^P:1/\XCn_3^P:1/\XCn_4^P)
            \)
        }
        \Cost{%
            \(
            7\MUL +
            1\INV
            \)
        }
		\(\V_1 \gets \XCn_2^P\cdot \XCn_3^P\)
		\tcp*{1M}
		\(\V_2 \gets \XCn_1^P/(\V_1\cdot \XCn_4^P) \)
		\tcp*{2M+1I}
		\(\V_3 \gets \V_2\cdot \XCn_4^P\)
		\tcp*{1M}
		\Return{\((\V_3\cdot \XCn_3, \V_3\cdot \XCn_2, \V_1\cdot\V_2)\)}
		\tcp*{3M}
    }
\end{algorithm}


\begin{algorithm}[ht!]
    \setstretch{1.1}
    \caption{{\tt xDBLADD}:
        combined pseudo-addition and doubling on \(\KChudnovsky\).
    }
    \label{alg:xDBLADD-kummer}
    \SetKwProg{Function}{function}{}{}
    \SetKwInOut{Cost}{Cost}
    \SetKwData{V}{V}
    \SetKwData{W}{W}
    \Function{{\tt xDBLADD}}{
        \KwIn{%
            \(\pm P, \pm Q\) in \(\KChudnovsky(\FF_p)\),
            and \((w_2,w_3,w_4) = {\tt WRAP}(\pm(P-Q))\) in \(\FF_p^3\) 
        }
        \KwOut{\( (\pm[2]P, \pm (P+Q)) \in \KChudnovsky(\FF_p)^2 \) }
        \Cost{%
            \(
            7\MUL +
            12\SQR + 
            12\MLC + 
            16\ADD +
            16\SUB
            \)
        }
        \SetKwData{V}{V}
        \( (\V_1,\V_2) \gets (\Hadamard(\V_1),\Hadamard(\V_2)) \)
        \;
        \( (\V_1,\V_2) \gets (\PPsqr(\V_1),\PPmul(\V_1,\V_2)) \)
        \;
        \( 
            (\V_1,\V_2) 
            \gets 
            \left(
                \PPmul(\V_1,
                    (\dualof{\epsilon}_1,\dualof{\epsilon}_2,\dualof{\epsilon}_3,\dualof{\epsilon}_4)
                ),
                \PPmul(\V_2,
                    (\dualof{\epsilon}_1,\dualof{\epsilon}_2,\dualof{\epsilon}_3,\dualof{\epsilon}_4)
                )
            \right)
        \)
        \;
        \( (\V_1,\V_2) \gets (\Hadamard(\V_1),\Hadamard(\V_2)) \) 
        \;
        \( (\V_1,\V_2) \gets (\PPsqr(\V_1),\PPsqr(\V_2)) \)
        \;
        \( (\V_1,\V_2) \gets
            \left(
                \PPmul(\V_1,
                    (\epsilon_1,\epsilon_2,\epsilon_3,\epsilon_4)
                    )
                ),
                \PPmul(\V_2, (1,w_2,w_3,w_4))
                )
            \right)
        \)
        \;
        \Return{\( (\V_1,\V_2) \)}
    }
\end{algorithm}

\subsection{Subroutines for signature verification}

The crucial {\tt Check} function for \(\KChudnovsky\)
(Algorithm~\ref{alg:kummer-check})
calls subroutines
{\tt BiiValues}
and 
{\tt BijValue} 
to compute the values of the biquadratic forms on \(\KInter\).
Algorithms~\ref{alg:bijvalue}
and~\ref{alg:biivalues}
are our simple implementations of these functions.
We choose to only store the four constants 
\(\squaredd_1\), \(\squaredd_2\), \(\squaredd_3\)
and \(\squaredd_4\),
but clearly one can gain some efficiency by
pre-computing more constants 
(\eg \(\squaredd_1\squaredd_2\), 
\(\squaredd_1\squaredd_4-\squaredd_2\squaredd_3\), etc.).
As the speed of this operation is not critical, 
it allows us to reduce
the number of necessary constants.
The four values of \(B_{11}\),
\(B_{22}\), \(B_{33}\), and \(B_{44}\)
are computed simultaneously,
since many of the intermediate operands 
are shared 
(as is clear from Equations~\eqref{eq:BInt_11}
through~\eqref{eq:BInt_44}).

\begin{algorithm}[ht]
    \setstretch{1.1} 
    \caption{        
        {\tt BijValue}:
        evaluates \emph{one}
        of the off-diagonal \(B_{ij}\)
        on~\(\KInter\).
    }
    \label{alg:bijvalue}
    \SetKwInOut{Cost}{Cost}
    \SetKwProg{Function}{function}{}{}
    \SetKwData{U}{U}
    \SetKwData{V}{V}
    \SetKwData{W}{W}
	\Function{{\tt BijValue}}{
        \KwIn{%
            \(\pm P\), \(\pm Q\) in \(\KInter(\FF_p)\)
            and
            \((i,j)\)
            such that
            \(\{i,j,k,l\}=\{1,2,3,4\}\)
        }
        \KwOut{%
            \(\BInt_{ij}(\pm P, \pm Q)\) in \(\FF_p\)
        }
        \Cost{%
            \(
            10\MUL +
            10\MLC + 
            1\ADD +
            5\SUB
            \)
        }
        \( 
            (
                \V_0, \V_1, \V_2, \V_3
            ) 
            \gets
            (
                Y_i^P \cdot Y_j^P,
                Y_k^P \cdot Y_l^P,
                Y_i^Q \cdot Y_j^Q,
                Y_k^Q \cdot Y_l^Q
            ) 
        \)
        \;
        \( 
            (
                \V_0,\V_2
            ) 
            \gets
            (
                \V_0 - \V_1,
                \V_2 - \V_3
            ) 
        \)
        \;
        \( 
            (
                \V_0,\V_1
            ) 
            \gets
            (
                \V_0 \cdot \V_2,
                \V_1 \cdot \V_3
            ) 
        \)
        \;
        \( 
            (
                \V_0,\V_1
            ) 
            \gets
            (
                \V_0 \cdot \squaredd_k\squaredd_l,
                \V_1 \cdot (\squaredd_i\squaredd_j-\squaredd_k\squaredd_l)
            ) 
        \)
        \;
        \( 
            \V_0
            \gets
            \V_0 + \V_1
        \)
        \;
        \( 
            \V_0
            \gets
            \V_0 \cdot \squaredd_i\squaredd_j(\squaredd_i\squaredd_k-\squaredd_j\squaredd_l)(\squaredd_i\squaredd_l-\squaredd_j\squaredd_k)
        \)
        \;
        \( 
            \V_0
            \gets
            \V_0 \cdot C
        \)
        \;
        \Return{\(\V_0\)}
	}
\end{algorithm}


\begin{algorithm}[ht]
    \setstretch{1.1} 
    \caption{        
        {\tt BiiValues}:
        evaluates \(B_{11}\), \(B_{22}\), \(B_{33}\), and \(B_{44}\)
        on \(\KInter\).
    }
    \label{alg:biivalues}
    \SetKwProg{Function}{function}{}{}
    \SetKwInOut{Cost}{Cost}
    \SetKwData{U}{U}
    \SetKwData{V}{V}
    \SetKwData{W}{W}
	\Function{{\tt BiiValues}}{
        \KwIn{%
            \(\pm P\), \(\pm Q\) in \(\KInter(\FF_p)\)
        }
        \KwOut{%
            \(
                ( \BInt_{ii}(\pm P, \pm Q) )_{i=1}^4
            \) in \(\FF_p^4\)
        }
        \Cost{%
            \(
            16\MUL +
            8\SQR +
            28\MLC + 
            24\ADD
            \)
        }
        \( 
            (
                \V, \W
            ) 
            \gets
            (
                \pm\Pc,
                \pm\Q
            ) 
        \)
        \;
        \( 
            (
                \V, \W
            ) 
            \gets
            (
                \PPsqr(\V),
                \PPsqr(\W)
            ) 
        \)
        \;
        \( 
            (
                \V, \W
            ) 
            \gets
            \left(
                \PPmul(\V,
                    (\dualof{\epsilon}_1,\dualof{\epsilon}_2,\dualof{\epsilon}_3,\dualof{\epsilon}_4)
                ),
                \PPmul(\W,
                    (\dualof{\epsilon}_1,\dualof{\epsilon}_2,\dualof{\epsilon}_3,\dualof{\epsilon}_4)
                )
            \right)
        \)
        \;
        \(
            \U
            \gets 
            \left(
                \begin{array}{c}
                    \Dotprod(\V, (\W_1,\W_2,\W_3,\W_4))\,,
                    \Dotprod(\V, (\W_2,\W_1,\W_4,\W_3))\,,
                    \\
                    \Dotprod(\V, (\W_3,\W_4,\W_1,\W_2))\,,
                    \Dotprod(\V, (\W_4,\W_3,\W_2,\W_1))
                \end{array}
            \right)
        \)
        \;
        \(
            \V
            \gets 
            \left(
                \begin{array}{c}
                    \Dotprod(\U, (\kappad_1,\kappad_2,\kappad_3,\kappad_4))\,,
                    \Dotprod(\U, (\kappad_2,\kappad_1,\kappad_4,\kappad_3))\,,
                    \\
                    \Dotprod(\U, (\kappad_3,\kappad_4,\kappad_1,\kappad_2))\,,
                    \Dotprod(\U, (\kappad_4,\kappad_3,\kappad_2,\kappad_1))
                \end{array}
            \right)
        \)
        \;
        \( 
            \V
            \gets
            \PPmul(\V, (\squaredd_1,\squaredd_2,\squaredd_3,\squaredd_4))
        \)
        \;
        \Return{\(\V\)}
	}
\end{algorithm}

\subsection{Subroutines for compression and decompression}

The compression and decompression functions
in Algorithms~\ref{alg:compress}
and~\ref{alg:decompress}
require the evaluation of the polynomials \(K_2\), \(K_3\), and~\(K_4\).
We used the simple strategy in
Algorithms~\ref{alg:getk2}, \ref{alg:getk3}, and~\ref{alg:getk4}
(\({\tt get\_K_2}\), \({\tt get\_K_3}\), and \({\tt get\_K_4}\), respectively),
which prioritises low stack usage over speed (which is again not critical here).

\begin{algorithm}[ht]
    \setstretch{1.1}
    \caption{        
        \({\tt get\_K_2}\):
        evaluates the polynomial \(K_2\) at \((l_1,l_2,\tau)\).
    }
    \label{alg:getk2}
    \SetKwProg{Function}{function}{}{}
    \SetKwInOut{Cost}{Cost}
    \SetKwData{V}{V}
    \SetKwData{W}{W}
	\Function{\({\tt get\_K_2}\)}{
        \KwIn{%
            \(\left(l_1,l_2,\tau\right)\) 
            with \(l_1,l_2\in \FF_p\)
            and \(\tau\in\left\{0,1\right\}\)
        }
        \KwOut{%
            \(K_2(l_1,l_2,\tau)\) in \(\FF_p\) as in Equation~(\ref{eq:k2})
        }
        \Cost{%
            \(
            1\MUL +
            3\SQR +
            6\MLC + 
            4\ADD +
            2\SUB
            \)
        }
        \begin{multicols}{2}
            \( \V \gets l_1\cdot q_2\)
            \;
            \( \V \gets l_2\cdot \V\)
            \;
            \If{\(\tau = 1\)}{
                \( \W \gets l_1\cdot q_0\)
                \;
                \( \V \gets \V + \W\)
                \;
                \( \W \gets l_2\cdot q_1\)
                \;
                \( \V \gets \V - \W\)
                \;
            }
            \( \V \gets \V \cdot q_3\)
            \;
            \( \V \gets \V + \V\)
            \;
            \( \W \gets l_1 + q_5\)
            \;
            \( \W \gets \W^2\)
            \;
            \( \V \gets \W - \V\)
            \;
            \( \W \gets l_2 \cdot q_3\)
            \;
            \( \W \gets \W^2\)
            \;
            \( \V \gets \W + \V\)
            \;
            \If{\(\tau = 1\)}{
                \( \W \gets q_4^2\)
                \;
                \( \V \gets \W + \V\)
                \;
            }
            \Return{\(\V\)}
        \end{multicols}
	}
\end{algorithm}

\begin{algorithm}[ht]
    \setstretch{1.1}
    \caption{        
        \({\tt get\_K_3}\):
        evaluates the polynomial \(K_3\) at \((l_1,l_2,\tau)\).
    }
    \label{alg:getk3}
    \SetKwProg{Function}{function}{}{}
    \SetKwInOut{Cost}{Cost}
    \SetKwData{U}{U}
    \SetKwData{V}{V}
    \SetKwData{W}{W}
	\Function{\({\tt get\_K_3}\)}{
        \KwIn{%
            \(\left(l_1,l_2,\tau\right)\) 
            with \(l_1,l_2\in \FF_p\)
            and \(\tau\in\left\{0,1\right\}\)
        }
        \KwOut{%
            \(K_3(l_1,l_2,\tau)\) in \(\FF_p\) as in Equation~(\ref{eq:k3})
        }
        \Cost{%
            \(
            3\MUL +
            2\SQR +
            6\MLC + 
            4\ADD +
            2\SUB
            \)
        }
        \begin{multicols}{2}
            \( \U \gets l_2^2 \)
            \;
            \( \V \gets l_1^2 \)
            \;
            \If{\(\tau = 1\)}{
                \( \W \gets \U + \V \)
                \;
                \( \W \gets \W \cdot q_2 \)
                \;
                \( \U \gets \U+1 \)
                \;
                \( \V \gets \V+1 \)
                \;
            }
            \( \U \gets \U\cdot l_1 \)
            \;
            \( \V \gets \V\cdot l_2 \)
            \;
            \( \U \gets \U\cdot q_1 \)
            \;
            \( \V \gets \V\cdot q_0 \)
            \;
            \( \V \gets \V - \U \)
            \;
            \If{\(\tau = 1\)}{
                \( \V \gets \V + \W \)
                \;
            }
            \( \V \gets \V \cdot q_3 \)
            \;
            \If{\(\tau = 1\)}{
                \( \U \gets l_1 \cdot l_2 \)
                \;
                \( \U \gets \U \cdot q_6 \)
                \;
                \( \U \gets \U \cdot q_7 \)
                \;
                \( \V \gets \V - \U \)
                \;
            }
            \Return{\(\V\)}
        \end{multicols}
	}
\end{algorithm}

\begin{algorithm}[ht]
    \setstretch{1.1}
    \caption{        
        \({\tt get\_K_4}\):
        evaluates the polynomial \(K_4\) at \((l_1,l_2,\tau)\).
    }
    \label{alg:getk4}
    \SetKwProg{Function}{function}{}{}
    \SetKwInOut{Cost}{Cost}
    \SetKwData{U}{U}
    \SetKwData{V}{V}
    \SetKwData{W}{W}
	\Function{\({\tt get\_K_4}\)}{
        \KwIn{%
            \(\left(l_1,l_2,\tau\right)\) 
            with \(l_1,l_2\in \FF_p\)
            and \(\tau\in\left\{0,1\right\}\)
        }
        \KwOut{%
            \(K_4(l_1,l_2,\tau)\) in \(\FF_p\) as in Equation~(\ref{eq:k4})
        }
        \Cost{%
            \(
            3\MUL +
            3\SQR +
            6\MLC + 
            4\ADD +
            2\SUB
            \)
        }
        \begin{multicols}{2}
        \If{\(\tau = 1\)}{
            \( \W \gets l_2 \cdot q_0 \)
            \;
            \( \V \gets l_1 \cdot q_1 \)
            \;
            \( \W \gets \W - \V \)
            \;
            \( \W \gets \W + q_2 \)
            \;
            \( \W \gets \W \cdot l_1 \)
            \;
            \( \W \gets \W \cdot l_2 \)
            \;
            \( \W \gets \W \cdot q_3 \)
            \;
            \( \W \gets \W + \W \)
            \;
            \( \V \gets l_1 \cdot q_3 \)
            \;
            \( \V \gets \V^2 \)
            \;
            \( \W \gets \V - \W \)
            \;
            \( \V \gets l_2 \cdot q_5 \)
            \;
            \( \V \gets \V^2 \)
            \;
            \( \W \gets \V + \W \)
            \;
        }
        \( \V \gets l_1 \cdot q_4 \)
        \;
        \( \V \gets \V \cdot l_2 \)
        \;
        \( \V \gets \V^2 \)
        \;
        \If{\(\tau = 1\)}{
            \( \V \gets \V + \W \)
            \;
        }
        \Return{\(\V\)}
        \end{multicols}
	}
\end{algorithm}

\fi

\end{document}